 \newcommand{\Ceil}[1]{\left\lceil{#1}\right\rceil}
 \newcommand{\Abs}[1]{\left|{#1}\right|}
 \newcommand{\etal}{{et al.}}
 \def\N{{\cal N}}
 \theoremstyle{definition}
 \newtheorem{observation}[theorem]{Observation}
 \title{Local Routing in Sparse and Lightweight Geometric Graphs} 
 \titlerunning{Routing in Sparse Graphs}
 \author{Vikrant Ashvinkumar}{University of Sydney, Australia}{vash7242@uni.sydney.edu.au}{}{}
 \author{Joachim Gudmundsson}{University of Sydney, Australia}{joachim.gudmundsson@sydney.edu.au}
 {}{Funded by the Australian Government through the Australian Research Council DP150101134 and DP180102870.}
 \author{Christos Levcopoulos}{Lund University, Sweden}{christos.levcopoulos@cs.lth.se}{}{Swedish Research Council grants 2017-03750 and 2018-04001.}
 \author{Bengt J. Nilsson}{Malmö University, Sweden}{bengt.nilsson.ts@mau.se}{}{}
 \author{Andr\'e van Renssen}{University of Sydney, Australia}{andre.vanrenssen@sydney.edu.au}{}{}
 \authorrunning{Ashvinkumar et al.} 
 \keywords{Computational geometry, Spanners, Routing}
\begin{document}

\maketitle

\begin{abstract}
Online routing in a planar embedded graph is central to a number of fields and has been studied extensively in the literature. For most planar graphs no $O(1)$-competitive online routing algorithm exists. A notable exception is the Delaunay triangulation for which Bose and Morin~\cite{bose2004online} showed that there exists an online routing algorithm that is $O(1)$-competitive. However, a Delaunay triangulation can have $\Omega(n)$ vertex degree and a total weight that is a linear factor greater than the weight of a minimum spanning tree.

We show a simple construction, given a set $V$ of $n$ points in the Euclidean plane, of a planar geometric graph on $V$ that has small weight (within a constant factor of the weight of a minimum spanning tree on $V$), constant degree, and that admits a local routing strategy that is $O(1)$-competitive. Moreover, the technique used to bound the weight works generally for any planar geometric graph whilst preserving the admission of an $O(1)$-competitive routing strategy.

\keywords{Computational geometry \and Spanners \and Routing}
\end{abstract}

\section{Introduction}
\label{sec:Introduction}

The aim of this paper is to design a graph on $V$ (a finite set of points in the Euclidean plane) that is cheap to build and easy to route on. Consider the problem of finding a route in a geometric graph from a given source vertex $s$ to a given target vertex $t$. Routing in a geometric graph is a fundamental problem that has received
considerable attention in the literature.  In the offline setting, when we have full knowledge of the graph, the problem is well-studied and numerous algorithms exist for finding shortest paths (for example, the classic Dijkstra's Algorithm~\cite{Dijkstra}). In an online setting the problem becomes much more complex. The route is constructed incrementally and at each vertex a local decision has to be taken to decide which vertex to forward the message to. Without knowledge of the full graph, an online routing algorithm cannot identify a shortest path in general; the goal is to follow a path whose length approximates that of the shortest path.

Given a source vertex $s$, a target vertex $t$, and a message $m$, the aim is for an online routing algorithm to send $m$ together with a header $h$ from $s$ to $t$ in a graph $G$. Initially the algorithm only has knowledge of $s$, $t$ and the neighbors of $s$, denoted $\N(s)$. Note that it is commonly assumed that for a vertex $v$, the set $\N(v)$ also includes information about the coordinates of the vertices in $\N(v)$. Upon receiving a message $m$ and its header $h$, a vertex $v$ must select one of its neighbours to forward the message to as a function of $h$, $\N(v)$, $s$, and $t$. This procedure is repeated until the message reaches the target vertex $t$. Different routing algorithms are possible depending on the size of $h$ and the part of $G$ that is known to each vertex. Usually, there is a trade-off between the amount of information that is stored in the header and the amount of information that is stored in the vertices.

%----------------------------------------------------
Bose and Morin~\cite{bose2004online} showed that greedy routing always reaches the intended destination on Delaunay triangulations. Dhandapani~\cite{D10} proved that every triangulation can be embedded in such a way that it allows greedy routing and Angelini et al.~\cite{AFG10} provided a constructive proof.

However, the above papers only prove that a greedy routing algorithm will succeed on the specific graphs therein. No attention is paid to the quality or \emph{competitiveness} of the resulting path relative to the shortest path. Bose and Morin~\cite{bose2004online} showed that many local routing strategies are not competitive but also show how to route competitively in a Delaunay triangulation. Bonichon~\etal~\cite{bonichon2018improved,bonichon2017upper} provided different local routing algorithms for the Delaunay triangulation, decreasing the competitive ratio, and Bonichon~\etal~\cite{bonichon2016gabriel} designed a competitive routing algorithm for Gabriel triangulations. 

To the best of our knowledge most of the existing routing algorithms consider well-known graph classes such as triangulations and $\Theta$-graphs. However, these graphs are generally very expensive to build. Typically, they have high degree $(\Omega(n))$ and the total length of their edges can be as bad as $\Omega(n)$ times that of the minimum spanning tree of $V$. 

On the other hand, there is a large amount of research on constructing geometric planar graphs with `good' properties. However, none of these have been shown to have all of bounded degree, weight, planarity, and the admission of competitive local routing. Bose~\etal~\cite{bfrv-olrdt-15} come tantalisingly close by providing a local routing algorithm for a plane bounded-degree spanner.

In terms of bounded degree, the best bound for plane spanners is 4 by Bonichon~\etal~\cite{Bonichon2015}. This spanner has a spanning ratio of 156.82. Another construction that also achieves a maximum degree of 4 was given by Kanj~\etal~\cite{DBLP:journals/corr/KanjPT16}, who reduced the spanning ratio to 20. In two special cases, Bose~\etal~\cite{DBLP:journals/corr/BiniazBCGMS16} showed that reducing the degree to 3 is possible. In terms of lower bounds, Dumitrescu and Ghosh~\cite{dumitrescu2016lower} showed that there exist point sets that require a spanning ratio of at least 1.4308. They also strengthened this bound to 2.1755 for spanners of degree 4 and 2.7321 for spanners of degree~3. 

The search for low weight spanners started in 1993 when Alth{\"{o}}fer et al.~\cite{addjs-93} presented the greedy spanner. Das et al.~\cite{dhn-93,dns-95} showed that the weight of a greedy spanner for a set $V$ of points in $R^d$ is within a constant factor times the weight of a minimum spanning tree, for any constant $d$. For a complete proof see the book by Narasimhan and Smid~\cite{narasimhan2007geometric}. In more recent work these results have been generalised to a wider family of doubling metrics~\cite{blw-19,fs-20,g-15}. 

In this paper we consider the problem of constructing a geometric graph of small weight and small degree that guarantees a local routing strategy that is $O(1)$-competitive. More specifically we show:

Given a set $V$ of $n$ points in the plane, together with two parameters $0 < \theta < \pi/2$ and $r > 0$, we show how to construct in $O(n \log n)$ time a planar $((1+1/r) \cdot \tau)$-spanner with degree at most $5\lceil 2\pi/\theta \rceil$, and weight at most $((2r+1)\cdot \tau)$ times the weight of a minimum spanning tree of $V$, where $\tau=1.998 \cdot \max(\pi/2,\pi \sin(\theta/2) + 1)$. This construction admits an $O(1)$-memory deterministic $1$-local routing algorithm with a routing ratio of no more than $5.90 \cdot (1 + 1/r) \cdot \max(\pi/2,\pi \sin(\theta/2) + 1)$. 

While we focus on our construction, we note that the techniques used to bound the weight of the graph apply generally to any planar geometric graph. In particular, using techniques similar to the ones we use, it may be possible to extend the results by Bose~\etal~\cite{bfrv-olrdt-15} to obtain other routing algorithms for bounded-degree light spanners.

\section{Building the Network} \label{sec:pruning}
Given a Delaunay triangulation $\mathcal{DT}(V)$ of a point set $V$ we will show that one can remove edges from $\mathcal{DT}(V)$ such that the resulting graph $\mathcal{BDG}(V)$ has constant degree and constant stretch-factor. We will also show that the resulting graph has the useful property that for every Delaunay edge $(u,v)$ in $\mathcal{DT}(V)$ there exists a spanning path along the boundary of the face in $\mathcal{BDG}(V)$ containing $u$ and $v$. This property will be critical to develop the routing algorithm in Section~\ref{sec:routing}. In Section~\ref{sec:lightness} we will show how to prune $\mathcal{BDG}(V)$ further to guarantee the lightness property while still being able to route in it.

\subsection{Building a Bounded Degree Spanner} \label{ssec:algorithm} 
The idea behind the construction is slightly reminiscent to that of the $\Theta$-graph: For a given parameter $0 < \theta < \pi/2$, let $\kappa = \Ceil{2\pi/\theta}$ and let $\mathcal{C}_{u,\kappa}$ be a set of $\kappa$ disjoint cones partitioning the plane, with each cone having angle measure at most $\theta$ at apex $u$. Let $v_0,\dots,v_m$ be the clockwise-ordered Delaunay neighbours of $u$ within some cone $C\in\mathcal{C}_{u,\kappa}$ (see Figure~\ref{fig:deg1}a).

\begin{figure}[ht] 
    \centering
    \includegraphics[width=\textwidth]{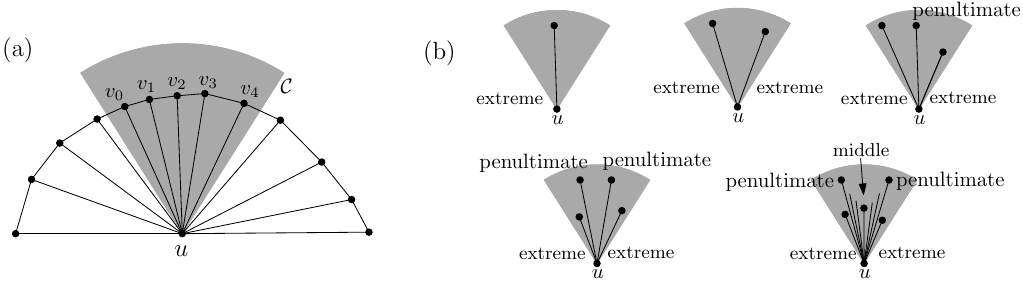}
    \caption{(a) An example of the vertices in some cone $C$ with apex $u$. (b) Extreme, penultimate, and middle are mutually exclusive properties taking precedence in that order.}
    \label{fig:deg1}
\end{figure}

If there is at least one edge at $u$ induced by $C$, call edges $uv_0$ and $uv_m$ \emph{extreme} at $u$.
Call edges $uv_1$ and $uv_{m-1}$ \emph{penultimate} at $u$ if there are two distinct extreme edges at $u$ induced by $C$ and at least one other edge at $u$ induced by $C$.
If there are two distinct edges that are extreme at $u$ induced by $C$, and two distinct edges that are penultimate at $u$ induced by $C$, and at least one other edge at $u$ induced by $C$, then, of the remaining edges incident to $u$ and contained in $C$, the shortest one is called a \emph{middle} edge at $u$ (see Figure~\ref{fig:deg1}b).

The construction removes every edge except the extreme, penultimate, and middle ones in every $C\in\mathcal{C}_{u,\kappa}$, for every point $u$, in any order.
The edges present in the final construction are thus the ones which are either extreme, penultimate, or middle at both of their endpoints (not necessarily the same at each endpoint). 

The resulting graph is denoted by $\mathcal{BDG}(V)$. The construction time of this graph is dominated by constructing the Delaunay triangulation, which requires $O(n \log n)$ time. Given the Delaunay triangulation, determining which edges to remove takes linear time (see Section~\ref{sec:construction1}). The degree of $\mathcal{BDG}(V)$ is bounded by $5\kappa$, since each of the $\kappa$ cones $C \in \mathcal{C}_{u,\kappa}$ can induce at most five edges. It remains to bound the spanning ratio.

\subsection{Spanning Ratio} \label{ssec:Algorithm-Spanning Ratio}
Before proving that the network is a spanner (Corollary~\ref{cor:desideratum1}) we will need to prove some basic properties regarding the edges in $\mathcal{BDG}(V)$. We start with a simple but crucial observation about consecutive Delaunay neighbours of a vertex $u$.

\begin{lemma}\label{lem:fat}
    Let $C$ be a cone with apex $u$ and angle measure $0 < \theta < \pi/2$. Let $v_l , v , v_r$ be consecutive clockwise-ordered Delaunay neighbours of $u$ contained in $C$. The interior angle $\angle(v_l , v , v_r)$ must be at least $\pi - \theta$.
\end{lemma}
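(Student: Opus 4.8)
The plan is to exploit the empty-circle property of the Delaunay triangulation together with the inscribed angle theorem. First I would observe that since $v_l,v$ are angularly consecutive Delaunay neighbours of $u$ (and a cone is angularly convex, so no neighbour of $u$ can lie between them), the triangle $uv_lv$ is a face of $\mathcal{DT}(V)$; likewise $uvv_r$ is a face. In particular the circumcircle $\bigcirc$ of $u,v_l,v$ contains no point of $V$ in its interior, so $v_r$ lies on or outside $\bigcirc$. I would also note that, because $v$ lies angularly between $v_l$ and $v_r$ as seen from $u$, the points $v_l$ and $v_r$ lie on opposite sides of the line through $u$ and $v$.

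Next I would apply the inscribed angle theorem to the chord $uv$ of $\bigcirc$: every point of $\bigcirc$ lying on the $v_r$-side of the line through $u$ and $v$ sees the segment $uv$ under the angle $\pi-\angle(u,v_l,v)$, and every point strictly outside $\bigcirc$ on that same side sees $uv$ under a strictly smaller angle. Since $v_r$ lies on that side and not strictly inside $\bigcirc$, this yields $\angle(u,v_r,v)\le \pi-\angle(u,v_l,v)$.

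Finally I would assemble the bound. Since $v$ lies (angularly, from $u$) between $v_l$ and $v_r$, the interior angle decomposes as $\angle(v_l,v,v_r)=\angle(v_l,v,u)+\angle(u,v,v_r)$. Using the angle sums of the two triangles $uv_lv$ and $uvv_r$, namely $\angle(v_l,v,u)=\pi-\angle(u,v_l,v)-\angle(v_l,u,v)$ and $\angle(u,v,v_r)=\pi-\angle(u,v_r,v)-\angle(v,u,v_r)$, and substituting the inequality from the previous step, the terms involving $\angle(u,v_l,v)$ cancel and I am left with $\angle(v_l,v,v_r)\ge \pi-\angle(v_l,u,v)-\angle(v,u,v_r)=\pi-\angle(v_l,u,v_r)\ge \pi-\theta$, the last step using that all three neighbours lie in a cone of angle at most $\theta$ with apex $u$.

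The main obstacle I anticipate is making the second step fully rigorous: one has to be careful about which arc of $\bigcirc$ lies on which side of the chord $uv$ (this depends on whether $\angle(u,v_l,v)$ is acute or obtuse) and about degenerate configurations — for instance $u,v_l,v$ or $u,v,v_r$ collinear, or $v_r$ lying exactly on $\bigcirc$ — though the last case only makes the inequality tight and the collinear cases can be excluded under general position or treated by a limiting argument. Everything else is routine bookkeeping with triangle angle sums.
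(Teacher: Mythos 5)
Your proof is correct and is essentially the paper's argument: both reduce to the inequality $\angle(v_l,u,v_r)+\angle(v_l,v,v_r)\geq\pi$ for the quadrilateral $u,v_l,v,v_r$ with Delaunay diagonal $uv$, and then use $\angle(v_l,u,v_r)\leq\theta$. The only difference is that the paper asserts this opposite-angle condition in one line (it is the standard local Delaunay criterion), whereas you derive it explicitly from the empty circumcircle of $\triangle u v_l v$ via the inscribed angle theorem and triangle angle sums; your worry about the reflex case is handled in the paper by simply observing that a reflex interior angle at $v$ already exceeds $\pi-\theta$.
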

\begin{proof}
 In the case when $\angle(v_l, v, v_r)$ is reflex in the quadrilateral $u,v_l,v,v_r$ the lemma trivially holds. Let us thus examine the case when $\angle(v_l, v, v_r)$ is not, in which case the quadrilateral $u,v_l,v,v_r$ is convex and, as $u,v_l,v$ and $u,v_r,v$ are Delaunay triangles, $\angle(v_l, u, v_r) + \angle(v_l, v, v_r)$ must be at least $\pi$ (see Figure~\ref{fig:fig2}a). Since $v_l$ and $v_r$ lie in a cone with apex $u$ of angle measure $\theta$, $\angle(v_l, u, v_r)$ is at most $\theta$. Hence, $\angle(v_l, v, v_r)$ is at least $\pi-\theta$. 
\end{proof}

\begin{figure}[ht] 
    \centering
    \includegraphics[width=10cm]{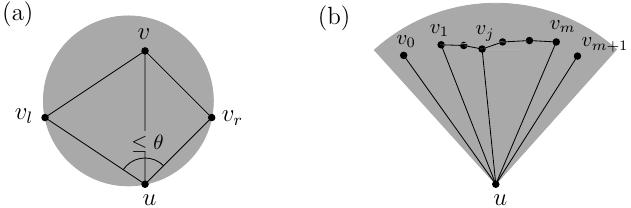}
    \caption{(a) Example placement of $u, v_l, v_r$ and $v$ in the circle $\circ(v_l,u,v_r)$ (b) The path from $v_1$ to $v_m$ along the Delaunay neighbours of $u$ must be in $\mathcal{BDG}(V)$. Furthermore, $uv_0$ and $uv_{m+1}$ are extreme, $uv_1$ and $uv_m$ are penultimate, and $uv_j$ is a middle edge.}
    \label{fig:fig2}
\end{figure}

This essentially means that $\angle(v_l,v,v_r)$ is wide, and will help us to argue when $v_lv$ and $vv_r$ must be in $\mathcal{BDG}(V)$ (Lemma~\ref{lem:boundary}). Next, we define \emph{protected}, \emph{fully protected}, and \emph{semi-protected} edges.   

\begin{definition}
    An edge $uv$ is \emph{protected} at $u$ (with respect to some fixed $\mathcal{C}_{u,\kappa}$) if it is extreme, penultimate, or middle at $u$. An edge $uv$ is \emph{fully protected} if it is protected at both $u$ and $v$. An edge $uv$ is \emph{semi-protected} at $u$ if it is protected at $u$ but not protected at $v$.
\end{definition}

Hence, an edge is contained in $\mathcal{BDG}(V)$ if and only if it is fully protected.
We continue with an observation that allows us to argue which edges are fully protected. 
\begin{observation} \label{obs:abut}
    If an edge $uv_i$ is not extreme at $u$, then $u$ must have consecutive clockwise-ordered Delaunay neighbours $v_{i-1}, v_i, v_{i+1}$, all in the same cone $C \in \mathcal{C}_{u,\kappa}$.
    Similarly, if $uv_i$ is neither extreme nor penultimate at $u$, then $u$ must have consecutive clockwise-ordered Delaunay neighbours $v_{i-2} , v_{i-1} , v_i , v_{i+1} , v_{i+2}$, all in the same cone $C \in \mathcal{C}_{u,\kappa}$.
\end{observation}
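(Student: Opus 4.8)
The plan is to reduce the statement to one structural fact about cones: the cones in $\mathcal{C}_{u,\kappa}$ are pairwise disjoint angular sectors with common apex $u$, so the set of Delaunay neighbours of $u$ lying in a fixed cone $C$ is a contiguous block $v_0,\dots,v_m$ within the cyclic clockwise order of \emph{all} Delaunay neighbours of $u$. In particular, for every index $j$ with $0<j<m$, the clockwise predecessor and the clockwise successor of $v_j$ among all Delaunay neighbours of $u$ are $v_{j-1}$ and $v_{j+1}$, and both again lie in $C$. The definitions of \emph{extreme} and \emph{penultimate} are, by construction, exactly the statements ``$v_j$ is the first or last element of the block'' and ``$v_j$ is the second or second-to-last element of the block''; so the whole observation is just a translation of the hypotheses into positions inside this block.

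First I would prove the first statement. Suppose $uv_i$ is not extreme at $u$, and let $C\in\mathcal{C}_{u,\kappa}$ be the cone containing $v_i$, with clockwise-ordered neighbours $v_0,\dots,v_m$ in $C$. If $m\le 1$ then both $uv_0$ and $uv_m$ are extreme and there is no other edge of $u$ in $C$, so $uv_i$ would be extreme; hence $m\ge 2$ and $v_i=v_j$ for some $0<j<m$. By the structural fact, $v_{j-1},v_j,v_{j+1}$ are consecutive clockwise Delaunay neighbours of $u$, all lying in $C$, which (after relabelling the indices in the global order) is exactly the claim.

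For the second statement I would run the same argument one notch further. If $uv_i$ is neither extreme nor penultimate at $u$, then in particular it is not extreme, so as above its cone $C$ satisfies $m\ge 2$ and $v_i=v_j$ with $0<j<m$. If $m\in\{2,3\}$ then every non-extreme edge of $u$ in $C$ is penultimate (for $m=2$ the only interior edge $uv_1$ is penultimate, since $uv_0$ and $uv_2$ are two distinct extreme edges; for $m=3$ the interior edges $uv_1,uv_2$ are precisely the two penultimate edges $uv_1,uv_{m-1}$), contradicting the hypothesis; hence $m\ge 4$. Moreover $uv_1$ and $uv_{m-1}$ are penultimate, so $v_j\notin\{v_1,v_{m-1}\}$ and therefore $2\le j\le m-2$. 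The structural fact then gives that $v_{j-2},v_{j-1},v_j,v_{j+1},v_{j+2}$ are consecutive clockwise Delaunay neighbours of $u$, all lying in $C$, as required. The argument is pure bookkeeping, so I do not expect a real obstacle; the only step needing care is this low-cardinality case analysis, i.e.\ checking that a cone holding at most two (resp.\ at most four) Delaunay neighbours of $u$ contains only extreme (resp.\ only extreme or penultimate) edges, so that the hypotheses genuinely place $v_i$ strictly inside the block with one (resp.\ two) neighbours to spare on each side.
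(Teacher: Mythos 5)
Your proof is correct and matches the paper's intended reasoning: the paper states this as an Observation without proof, treating it as immediate from the definitions, and your argument simply spells that out --- the neighbours in a cone form a contiguous block in the clockwise order around $u$, extreme/penultimate edges occupy the outermost/second-outermost positions of that block, and your low-cardinality case analysis ($m\le 1$, resp.\ $m\le 3$) correctly rules out the degenerate situations.
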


\begin{lemma}\label{lem:pensmall}
    Every edge that is penultimate or middle at one of its endpoints is fully protected.
\end{lemma}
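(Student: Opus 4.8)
The plan is to prove a slightly stronger statement from which the lemma is immediate: \emph{if an edge $uv$ is not extreme at $u$, then it is extreme (and hence protected) at $v$}. Granting this, the lemma follows at once: an edge that is penultimate or middle at $u$ is, by definition, protected at $u$ and is not extreme at $u$, so by the stronger statement it is protected at $v$ as well, hence fully protected; and the case of an edge penultimate or middle at $v$ is symmetric.

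To prove the stronger statement, suppose $uv$ is not extreme at $u$. By Observation~\ref{obs:abut}, $u$ has consecutive clockwise-ordered Delaunay neighbours $v_l, v, v_r$ lying in a common cone $C\in\mathcal{C}_{u,\kappa}$. First I would record what Lemma~\ref{lem:fat} supplies here: $\angle(v_l, v, v_r)\ge \pi-\theta$, where (as in the proof of that lemma) $\angle(v_l, v, v_r)$ is the angle at $v$ in the quadrilateral $\Delta u, v_l, v, v_r$ — concretely, the angular sector at apex $v$ bounded by the rays $vv_l$ and $vv_r$ that contains the ray $vu$. Since $\theta<\pi/2$, this sector has measure strictly greater than $\theta$.

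Next I would make a combinatorial observation about $\mathcal{DT}(V)$. Because $v_l, v$ and $v, v_r$ are pairs of consecutive Delaunay neighbours of $u$, the triangles $uv_lv$ and $uvv_r$ are faces of $\mathcal{DT}(V)$; hence, in the cyclic order of the Delaunay neighbours of $v$, the two vertices immediately before and after $u$ are exactly $v_l$ and $v_r$. Now assume, for contradiction, that $uv$ is also not extreme at $v$. Applying Observation~\ref{obs:abut} at $v$ yields consecutive clockwise-ordered Delaunay neighbours $a, u, b$ of $v$ inside a common cone $C'\in\mathcal{C}_{v,\kappa}$; since $a$ and $b$ are then the vertices immediately before and after $u$ in the cyclic order around $v$, we get $\{a,b\}=\{v_l,v_r\}$. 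Hence the rays $vv_l$, $vu$, $vv_r$ all lie in $C'$ with $vu$ between the other two, so the sector at $v$ between $vv_l$ and $vv_r$ containing $vu$ — the very quantity bounded below by $\pi-\theta$ above — has measure at most the angle measure of $C'$, i.e.\ at most $\theta$. This contradicts $\pi-\theta>\theta$, so $uv$ is extreme at $v$, completing the proof of the stronger statement and hence of the lemma.

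I expect the one point needing care to be the identification, used above, of the angle named in Lemma~\ref{lem:fat} with ``the sector at $v$ between $vv_l$ and $vv_r$ that contains $vu$'': this must hold in \emph{both} the convex and the reflex configurations of $\Delta u, v_l, v, v_r$. In the convex case $vu$ is a diagonal of the quadrilateral and so lies inside its interior angle at $v$; in the reflex case the interior angle at $v$ is the reflex one, which already exceeds $\pi$ (so the inequality is trivial there) and which one checks still contains $vu$. A second, minor point is the implicit appeal to $\mathcal{DT}(V)$ being a genuine triangulation, so that consecutive Delaunay neighbours of a vertex span a triangular face — this is the usual general-position assumption.
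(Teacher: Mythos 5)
Your proposal is correct and follows essentially the same route as the paper: apply Observation~\ref{obs:abut} at $u$ to obtain $v_l, v, v_r$ in one cone, assume $uv$ is not extreme at $v$, apply Observation~\ref{obs:abut} at $v$ to force $vv_l$ and $vv_r$ into a single cone of measure at most $\theta$, and contradict Lemma~\ref{lem:fat}'s bound $\angle(v_l,v,v_r)\geq\pi-\theta>\theta$. You merely make explicit two steps the paper leaves implicit (that the neighbours of $v$ adjacent to $u$ in cyclic order are exactly $v_l$ and $v_r$, and the convex/reflex case distinction for the angle), which is fine.
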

\begin{proof}
    Consider an edge $uv$ that is penultimate or middle at $u$. Since it is protected at $u$, we need to show that it is protected at $v$. Since $uv$ is not extreme at $u$, $u$ must have consecutive clockwise-ordered Delaunay neighbours $v_l , v , v_r$ in the same cone by Observation~\ref{obs:abut}.

    We show that $uv$ must be extreme at $v$. Suppose for a contradiction that $uv$ is not extreme at $v$. Then, by Observation~\ref{obs:abut}, $v_lv$ and $vv_r$ are contained in the same cone with apex $v$ and angle at most $\theta < \pi/2$. However, by Lemma~\ref{lem:fat}, $\angle(v_l , v , v_r) \geq \pi - \theta > \theta$, which is impossible. Thus, $uv$ is extreme at $v$ and protected at $v$. Hence, the edge is fully protected. 
\end{proof}

Now we can argue about the Delaunay neighbours of a vertex (see Figure~\ref{fig:fig2}b for an illustration of the lemma).

\begin{lemma}\label{lem:boundary}
    Let $v_0 , \dots , v_{m+1}$ be the clockwise-ordered Delaunay neighbours of $u$ contained in some cone $C \in \mathcal{C}_{u,\kappa}$.
    The edges in the path $v_1 , \dots , v_m$ are all fully protected.
\end{lemma}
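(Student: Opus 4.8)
The plan is to fix an arbitrary edge $v_jv_{j+1}$ of the path, with $1\le j\le m-1$, and show that it is protected at $v_j$ and at $v_{j+1}$; by definition this says exactly that $v_jv_{j+1}$ is fully protected, hence that it belongs to $\mathcal{BDG}(V)$. First I would record two structural facts. Since $C$ is an angular sector and $v_0,\dots,v_{m+1}$ are precisely the Delaunay neighbours of $u$ in $C$, they are consecutive in the full clockwise order of $u$'s Delaunay neighbours, so $uv_iv_{i+1}$ is a Delaunay triangle for every $0\le i\le m$. In particular $uv_{j-1}v_j$ and $uv_jv_{j+1}$ are Delaunay triangles on the edge $uv_j$, so in the clockwise order of $v_j$'s Delaunay neighbours the entry $u$ is flanked by $v_{j-1}$ and $v_{j+1}$; symmetrically, from the Delaunay triangles $uv_jv_{j+1}$ and $uv_{j+1}v_{j+2}$, around $v_{j+1}$ the entry $u$ is flanked by $v_j$ and $v_{j+2}$.

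To prove $v_jv_{j+1}$ is protected at $v_j$, suppose not. Then it is neither extreme nor penultimate at $v_j$, so by the second part of Observation~\ref{obs:abut} (at $v_j$, with $v_{j+1}$ in the role of $v_i$) the vertex $v_j$ has five clockwise-consecutive Delaunay neighbours, with $v_{j+1}$ in the middle, all lying in one cone $C'$ of $v_j$. One of the two neighbours flanking $v_{j+1}$ around $v_j$ is $u$, and the neighbour flanking $u$ on the far side is $v_{j-1}$ (by the first paragraph); hence $v_{j-1},u,v_{j+1}$ are among those five neighbours and so all lie in $C'$, consecutively, with $u$ in the middle. Lemma~\ref{lem:fat} applied at apex $v_j$ to $v_{j-1},u,v_{j+1}$ then forces the interior angle at $u$ of the quadrilateral $\Delta v_j,v_{j-1},u,v_{j+1}$ to be at least $\pi-\theta$. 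But this quadrilateral is the union of the Delaunay triangles $uv_{j-1}v_j$ and $uv_jv_{j+1}$, which share the edge $uv_j$; so $uv_j$ is an interior diagonal, $u$ is not the reflex vertex, and that interior angle is just the ordinary angle $\angle(v_{j-1},u,v_{j+1})$. Since $v_{j-1}$ and $v_{j+1}$ lie in the cone $C$ at apex $u$, of measure at most $\theta<\pi/2$, we also have $\angle(v_{j-1},u,v_{j+1})\le\theta$. Hence $\pi-\theta\le\theta$, i.e.\ $\theta\ge\pi/2$, a contradiction. So $v_jv_{j+1}$ is protected at $v_j$; the completely analogous argument --- with $v_{j+2}$ in place of $v_{j-1}$, apex $v_{j+1}$ in place of $v_j$, and the triangles $uv_jv_{j+1}$, $uv_{j+1}v_{j+2}$ --- shows it is protected at $v_{j+1}$. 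Thus $v_jv_{j+1}$ is fully protected, and as $j$ was arbitrary the whole path lies in $\mathcal{BDG}(V)$.

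The part I expect to be the main obstacle is the incidence bookkeeping around $v_j$: being sure that the neighbours flanking $v_{j+1}$, and then flanking $u$, are exactly $u$ and $v_{j-1}$, so that it is genuinely $v_{j-1}$ (not some unrelated Delaunay-triangle apex) that gets swept into the cone $C'$. This is also why the five-neighbour clause of Observation~\ref{obs:abut} is needed rather than the three-neighbour one --- the latter only places $u$ and $v_{j+1}$ in a common cone of $v_j$, from which Lemma~\ref{lem:fat} gives nothing usable. A secondary point needing care is the non-reflexivity of the angle at $u$, without which Lemma~\ref{lem:fat}'s ``interior angle'' could not be pitted against the cone bound on $\angle(v_{j-1},u,v_{j+1})$; after these two points the contradiction is immediate from $\theta<\pi/2$.
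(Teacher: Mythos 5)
Your proof is correct and uses the same ingredients and the same contradiction structure as the paper's: assume the edge is unprotected at an endpoint, invoke the five-neighbour clause of Observation~\ref{obs:abut} to place the relevant neighbours in one cone, and pit the wide angle from Lemma~\ref{lem:fat} against a narrow cone. The only (essentially cosmetic) difference is the direction of the final step --- the paper applies Lemma~\ref{lem:fat} at apex $u$ to $v_{j-1},v_j,v_{j+1}$ and contradicts the cone at $v_j$, whereas you apply it at apex $v_j$ to $v_{j-1},u,v_{j+1}$ and contradict the cone $C$ at $u$ --- and one small attribution to fix: the non-reflexivity of the quadrilateral's angle at $u$ follows not from $uv_j$ being an interior diagonal (the diagonal from a reflex vertex to the opposite vertex is still interior) but from $v_{j-1},v_j,v_{j+1}$ all lying in the cone $C$ of measure $\theta<\pi/2$ at apex $u$, which bounds that angle by $\theta$.
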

\begin{proof}
    Let $v_iv_{i+1}$ be an edge along this path for some $1 \leq i < m$.
    Suppose for a contradiction that $v_iv_{i+1}$ is not protected at $v_i$.
    It is thus, in particular, neither extreme nor  penultimate at $v_i$.
    Then, by Observation~\ref{obs:abut}, $v_iu$ and $v_iv_{i-1}$ must be contained in the same cone with apex $v_i$ as $v_iv_{i+1}$.
    By Lemma~\ref{lem:fat}, $\angle(v_{i-1} , v_i , v_{i+1}) \geq \pi - \theta > \theta$, contradicting that $v_iv_{i-1}$ and $v_iv_{i+1}$ lie in the same cone with apex $v_i$.
    The edge $v_iv_{i+1}$ must therefore be either extreme or penultimate, and thus protected, at $v_{i}$ for $i \geq 1$.
    An analogous argument shows that $v_iv_{i+1}$ is either extreme or penultimate at $v_{i+1}$ for $1 < i+1 \leq m$.
    It is thus fully protected. 
\end{proof}

Since these paths $v_1,\dots,v_m$ are included in $\mathcal{BDG}(V)$, we can modify the proof of Theorem~$3$ by Li and Wang~\cite{li2003efficient} to suit our construction to prove that $\mathcal{BDG}(V)$ is a spanner. 
 
\begin{theorem}\label{thm:bdg_del}
    $\mathcal{BDG}(V)$ is a $\max(\pi/2 , \pi \sin(\theta / 2) + 1)$-spanner of the Delaunay triangulation $\mathcal{DT}(V)$ for an adjustable parameter $0 < \theta < \pi/2$.
\end{theorem}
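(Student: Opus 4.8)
The plan is to adapt the argument of Li and Wang's Theorem~3 for bounded-degree Delaunay spanners to our pruning rule. It suffices to show that for every Delaunay edge $(u,v)$ there is a path in $\mathcal{BDG}(V)$ from $u$ to $v$ of length at most $\max(\pi/2,\pi\sin(\theta/2)+1)\cdot |uv|$; composing such paths over the edges of a Delaunay shortest path then yields the claimed spanning ratio relative to $\mathcal{DT}(V)$.

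First I would set up an induction on the Euclidean length of Delaunay edges, ordering all Delaunay edges by length and arguing about an edge $(u,v)$ assuming the statement holds for all strictly shorter Delaunay edges. If $(u,v)\in\mathcal{BDG}(V)$ we are done, so assume it has been removed; then it was removed at $u$ (say), meaning $v=v_i$ for some $1\le i\le m$ among the clockwise-ordered Delaunay neighbours $v_0,\dots,v_{m+1}$ of $u$ in the cone $C\in\mathcal{C}_{u,\kappa}$ containing it, and $uv_i$ is not extreme, penultimate, or middle at $u$. By Lemma~\ref{lem:boundary}, the entire hull path $v_1,\dots,v_m$ lies in $\mathcal{BDG}(V)$, so I will route from $u$ to $v_i$ by first hopping to one of $v_0$ or $v_{m+1}$ — whichever side keeps us "moving toward" $v$ in the appropriate sense — and then following the hull path $v_0,v_1,\dots$ (respectively $v_{m+1},v_m,\dots$) until reaching $v_i$. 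The first hop $uv_0$ or $uv_{m+1}$ is extreme at $u$; I still need it to be in $\mathcal{BDG}(V)$, i.e.\ protected at its other endpoint, but if it is not, it is itself a shorter Delaunay edge (shorter because it subtends a smaller angle at the circumcircle, or can be handled by the same induction) and I can recurse on it — this is exactly the kind of bookkeeping Li and Wang's proof handles and I would mirror it.

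The length bound is the crux: I must show the hull detour $v_0 (\text{or } v_{m+1}), v_1, \dots, v_i$ has total length at most $\max(\pi/2,\pi\sin(\theta/2)+1)\cdot|uv_i|$. The key geometric inputs are (i) all of $v_0,\dots,v_{m+1}$ lie inside a cone of angle $\le\theta$ at $u$, (ii) Lemma~\ref{lem:fat} says each interior angle $\angle(v_{j-1},v_j,v_{j+1})\ge\pi-\theta$, so the hull path is "almost straight" and stays close to the arc of the circle through the relevant points, and (iii) standard Delaunay-circle arguments bounding the chord-to-arc ratio. Concretely, I would enclose the neighbours in a disk and bound the polygonal path $v_0,\dots,v_{m+1}$ by the circular arc it approximates; the arc has length at most $(\text{angle})\times(\text{radius})$, and since the neighbours subtend an angle $\le\theta$ at $u$ while lying at comparable distances, the arc length is at most roughly $\pi\sin(\theta/2)+1$ times $|uv_i|$ — with the $\pi/2$ term coming from the degenerate/small-$\theta$ regime. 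I would then combine the first-hop length $|uv_0|$ (or $|uv_{m+1}|$) with the hull-path length, using that $|uv_i|$ is sandwiched between the distances to $v_0$ and $v_{m+1}$ up to the cone-angle factor.

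The main obstacle I anticipate is making the chord-length-versus-arc-length estimate tight enough to land exactly on $\max(\pi/2,\pi\sin(\theta/2)+1)$ rather than a looser constant, and correctly handling the recursion when the extreme edge $uv_0$ or $uv_{m+1}$ used for the first hop has itself been pruned — in that case I must ensure the recursive call is on a strictly shorter Delaunay edge (or otherwise well-founded) so the induction closes, and that the accumulated length over possibly several recursive detours still telescopes to the stated ratio. This is precisely where adapting Li and Wang's argument requires care, since our notion of "protected" (extreme/penultimate/middle) differs from theirs, and I would need to check that Lemma~\ref{lem:pensmall} and Lemma~\ref{lem:boundary} give me exactly the hull-path availability their proof relies on.
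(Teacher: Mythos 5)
Your overall framing is right---it suffices to exhibit, for each pruned Delaunay edge $uv$, a detour in $\mathcal{BDG}(V)$ of length at most $\max(\pi/2,\pi\sin(\theta/2)+1)\cdot|uv|$---but the detour you choose breaks the argument. You route $u\to v_0$ (or $v_{m+1}$), i.e.\ along an \emph{extreme} edge, and then along the hull path. This fails for two reasons. First, an extreme edge is only guaranteed to be protected at $u$; it may be unprotected at its other endpoint and hence absent from $\mathcal{BDG}(V)$, and your proposed fix---recursing on it as a ``shorter'' Delaunay edge---is not well-founded, since an extreme edge of the cone can be arbitrarily longer than $uv$. Second, and more fundamentally, even when the extreme edge is present, the first hop $|uv_0|$ can exceed $|uv|$ by an unbounded factor, so no telescoping gets you back to the stated constant. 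The paper's proof avoids both problems by starting the detour with the \emph{middle} edge of the cone: by definition the middle edge is the shortest among all non-extreme, non-penultimate edges in the cone (so its length $x$ satisfies $x\le y=|uv|$, since $uv$ is one of those edges), and by Lemma~\ref{lem:pensmall} it is automatically fully protected, so no recursion or induction on edge length is needed at all. The pruned edge $uv$ is then a chord of a single face of $\mathcal{BDG}(V)$ bounded by the middle edge, the hull path (fully protected by Lemma~\ref{lem:boundary}), and a penultimate edge.

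The quantitative step is also more delicate than your arc-approximation sketch. The paper bounds the hull path $v_0,\dots,v_i=v$ in two stages: the geodesic $S(v_0,v_i)$ inside the face is trapped in the triangle $\Delta v_0,w,v_i$ (with $w$ on the segment $uv_i$ at distance $x$ from $u$), giving $|S(v_0,v_i)|\le 2x\sin(\varphi/2)+y-x$; then each geodesic edge shortcuts a one-sided convex subchain subtending an angle less than $\pi/2$ at $u$, so Lemma~6 of Bose, Gudmundsson and Smid inflates the geodesic by a factor of at most $\pi/2$. Adding the first hop $x$ and using $x/y\le 1$ yields exactly $y\cdot\max(\pi/2,\pi\sin(\theta/2)+1)$. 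Without the inequality $x\le y$---which you lose by hopping to an extreme vertex---the final maximum cannot be extracted.
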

\begin{proof}
    The proof of this theorem is illustrated in Figure~\ref{fig:fig3}. We show that for any edge $uv$ in $\mathcal{DT}(V)$ that is not present in $\mathcal{BDG}(V)$, there is a spanning path in $\mathcal{BDG}(V)$ from $u$ to $v$.

    The edges in $\mathcal{BDG}(V)$ are exactly the edges in $\mathcal{DT}(V)$ that are fully protected.
    Without loss of generality, let $uv$ be an edge in $\mathcal{DT}(V)$ that is not protected at $u$.
    Then, $uv$ is not extreme and must be a chord of the face $u , v_0 , \dots , v_i = v , \dots , v_m$ where $uv_0$ is a middle edge and $uv_m$ is a penultimate edge. According to Lemma~\ref{lem:pensmall} $uv_0$ and $uv_m$ are edges in $\mathcal{BDG}(V)$, and according to Lemma~\ref{lem:boundary} all the edges in the path $v_0, \ldots , v_m$ are included in $\mathcal{BDG}(V)$.
    Moreover, $\angle(v_0 , u , v_i) < \angle(v_0 , u , v_m) < \theta < \pi/2$.
    Consider $S(v_0 , v_i)$, the shortest curve with endpoints $v_0$ and $v_i$ contained in the polygon $u, v_0 , \dots , v_i = v$.
    Label $|uv_0|$ with $x$, $|uv_i|$ with $y$, and let $w$ be the point on the segment $uv_i$ with length $x$ so that $|wv_i| = y - x$.

    We will show that $S(v_0 , v_i)$ is contained in the triangle $v_0 , w , v_i$.
    If none of $v_1, \dots, v_{i-1}$ are contained in the triangle $v_0 , w , v_i$, the claim must hold since all such vertices must be additionally outside the circle with centre $u$ and radius $x$ ($uv_0$ is the middle edge) and thus the line segment joining $v_0$ and $v_i$ is unobstructed.
    If any of $v_1, \dots, v_{i-1}$ are in the triangle $v_0 , w , v_i$, then $v_0$ must connect directly to one of them along $S(v_0,v_i)$, say $p$, and $v_i$ must connect directly to one of them, say $q$ possibly the same as $p$.
    Since $S(v_0,v_i)$ can be seen as the lower convex hull of $v_0, \dots, v_i$,
    and since $p$ and $q$ are in the triangle $v_0 , w , v_i$, the subpath of $S(v_0 , v_i)$ with endpoints $p$ and $q$ must be in the triangle $v_0 , w , v_i$ too.

    Since $S(v_0 , v_i)$ is convex with base $v_0v_i$ and contained in the triangle $v_0 , w , v_i$; it must thus have a length not more than $|v_0w| + |wv_i| = 2x\sin(\varphi/2) + y - x$ where $\varphi < \theta < \pi/2$ is the angle $\angle(v_0 , u , v_i)$.
    Now consider an edge of $S(v_0,v_i)$, say $v_k v_l$.
    The edge $v_k v_l$ shortcuts the subpath $v_k,\dots,v_l$ of $v_0,\dots,v_i$ in $\mathcal{BDG}(V)$.
    
   Dobkin et al.~\cite{DFS-90} (see also Lemma 3.3 in~\cite{bose2004online}) showed that the length $|v_k,\dots,v_l|$ is at most $\pi/2 \cdot |v_kv_l|$, provided that 
   \begin{enumerate}
     \item the straight-line segment between $v_k$ and $v_l$ lies outside the Voronoi region induced by $u$, and 
     \item the path $v_k,\dots,v_l$ lies on one side of the line through $v_k$ and $v_l$.
   \end{enumerate} 
    
    The first property follows from the fact that $\theta<\pi/2$ and the second property follows from the construction. Since both conditions hold, $|v_0,\dots,v_i| \leq |S(v_0,v_i)| \cdot \pi/2 \leq (|v_0w| + |wv_i|)\pi/2 = (2x\sin(\varphi/2) + y - x)\pi/2$.
    Putting everything together, we have that the path $u , v_0 , \dots , v_m$ has length at most
    \begin{align*}
        & x + (2x\sin(\varphi/2) + y - x)\pi/2\\
        =\; & y(\pi/2 + (\pi\sin(\varphi/2) + 1 - \pi/2)x/y)\\
        \leq\; & y(\pi/2 + (\pi\sin(\theta/2) + 1 - \pi/2)x/y)\\
        \leq\; & y\cdot\max(\pi/2 , \pi\sin(\theta/2) + 1)\\
        =\; & |uv| \cdot \max(\pi/2 , \pi\sin(\theta/2) + 1).
    \end{align*}
    Since $x/y \in (0,1)$, the last inequality immediately follows.

    The right-hand-most side of the inequality shows that for any edge $uv$ in $\mathcal{DT}(V)$, there is a $\max(\pi/2 , \pi\sin(\theta/2) + 1)$-spanning path in $\mathcal{BDG}(V)$ between $u$ and $v$.
    $\mathcal{BDG}(V)$ is thus a $\max(\pi/2 , \pi\sin(\theta/2) + 1)$-spanner of the Delaunay triangulation $\mathcal{DT}(V)$ for an adjustable parameter $0 < \theta < \pi/2$. 
\end{proof}

\begin{figure} 
    \centering
    \includegraphics[width=10cm]{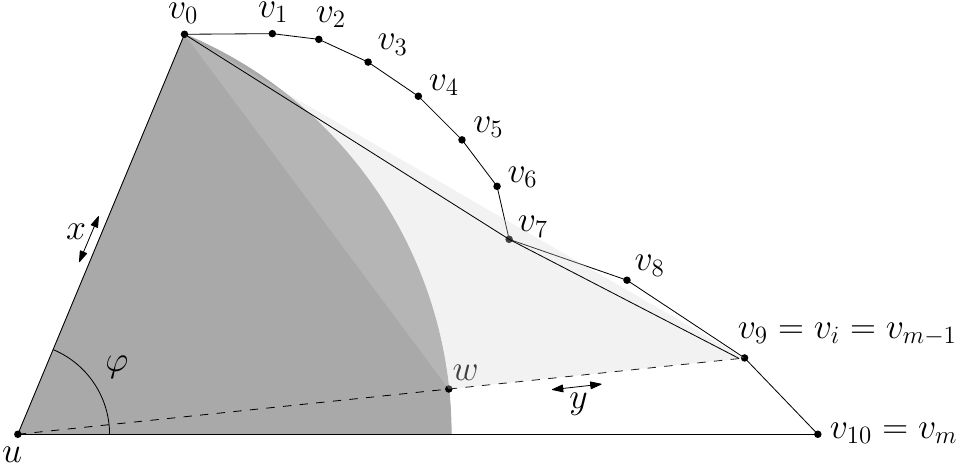}
    \caption{Illustrating Theorem~\ref{thm:bdg_del}. The spanning path from $u$ to $v$ is $u, v_0, v1, \dots, v_9 = v$. Note that the path $v_0, v_7, v$ is indeed contained in triangle $v_0, w, v_i$.}
    \label{fig:fig3}
\end{figure}

Note that from the proof of Theorem~\ref{thm:bdg_del} it follows that for every Delaunay edge $uv$  that is \emph{not} in $\mathcal{BDG}(V)$, there is a path from $u$ to $v$ along the face of $\mathcal{BDG}(V)$ containing $uv$ realising a path of length at most $\max(\pi/2 , \pi \sin(\theta / 2) + 1) \cdot |uv|$. This is a key observation that will be used in Section~\ref{sec:routing}.

%%%%%%%%%%%%%%%%%%%%%%%%%%%%%%%%%%%%%%%%%%%%%%%%%%%%%%%%%
\subsection{Algorithmic Construction of $\mathcal{BDG}(V)$}
\label{sec:construction1}
For completeness we state the algorithm in Section~\ref{sec:pruning} as pseudocode and analyse its time complexity.
\begin{algorithm}
    \caption{$\mathcal{BDG}(V)$}
    \begin{algorithmic}[1]
        \Require $V$
        \Require $0 < \theta < \pi/2$
        \State $E \gets \{\}$
        \State $\mathcal{DT} \gets \mathcal{DT}(V)$
        \For{$u \in V$}
        \State Compute $\mathcal{C}_{u,\kappa}$, where $\kappa = \Ceil{2\pi/\theta}$.
        \EndFor
        \For{$u \in V$}
            \For{$uv \in E(\mathcal{DT})$}
                \State Bucket $uv$ into $C \in \mathcal{C}_{u,\kappa}$.
            \EndFor
        \EndFor
        \For{$u \in V$}
            \For{$C \in \mathcal{C}_{u,\kappa}$}
                \State Reset values of $e_1, e_2, p_1, p_2, m$.
                \For{$e$ bucketed into $C$}
                    \State $e_1 \gets \text{arg}\min_{angle}(e_1, e)$
                    \State $e_2 \gets \text{arg}\max_{angle}(e_2, e)$
                \EndFor
                \For{$e$ bucketed into $C \backslash \{e_1,e_2\}$}
                    \State $p_1 \gets \text{arg}\min_{angle}(p_1, e)$
                    \State $p_2 \gets \text{arg}\max_{angle}(p_2, e)$
                \EndFor
                \For{$e$ bucketed into $C \backslash \{e_1,e_2,p_1,p_2\}$}
                    \State $m \gets \text{arg}\min_{length}(m, e)$
                \EndFor
                \State Mark $e_1,e_2, p_1,p_2, m$, if their values are set, as protected by $u$.
            \EndFor
        \EndFor
        \For{$uv \in E(\mathcal{DT})$}
            \If{$uv$ marked as protected by both endpoints}
                \State $E = E \cup \{uv\}$.
            \EndIf
        \EndFor
        \Return $(V,E)$.
    \end{algorithmic}
\end{algorithm}

\begin{theorem}
    $\mathcal{BDG}(V)$ takes $O(n\log n)$ time to construct. $\mathcal{BDG}(V)$ takes $O(n)$ time to construct if the input is a Delaunay triangulation $\mathcal{DT}(V)$ on $V$.
\end{theorem}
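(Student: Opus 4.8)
The plan is to walk through the pseudocode above line by line, bounding the cost of each loop, and to exploit throughout that $\mathcal{DT}(V)$ is a planar graph on $n$ vertices, so $|E(\mathcal{DT})| = O(n)$ and $\sum_{u \in V} \deg_{\mathcal{DT}}(u) = O(n)$. The only step whose cost can exceed $O(n)$ is the computation of $\mathcal{DT}(V)$, which takes $O(n\log n)$ by any standard Delaunay algorithm; when the input already is $\mathcal{DT}(V)$ this step is skipped, so it suffices to show that every remaining line runs in $O(n)$ total time to obtain both halves of the statement at once. I would also treat $\theta$ as a fixed parameter, so that $\kappa = \Ceil{2\pi/\theta} = O(1)$.

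First I would dispatch the cheap loops. Computing $\mathcal{C}_{u,\kappa}$ for one vertex is just writing down its $\kappa$ boundary rays, $O(\kappa)$ work, so the first \textbf{for} loop costs $O(n\kappa) = O(n)$. For the bucketing loop I would keep the $\kappa$ cone boundaries of each $\mathcal{C}_{u,\kappa}$ in sorted order, so that locating the cone containing a ray $\vec{uv}$ is a single binary search in $O(\log\kappa) = O(1)$ time; since this is done once per ordered incidence $(u, uv)$, the whole loop costs $O\bigl(\sum_u \deg_{\mathcal{DT}}(u)\bigr) = O(n)$. (If the Delaunay data structure already lists the neighbours of $u$ in angular order, one can instead merge that list against the sorted cone boundaries in $O(\deg_{\mathcal{DT}}(u) + \kappa)$ time, again $O(n)$ overall.)

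Next I would handle the main marking loop. For each $u$ and each cone $C \in \mathcal{C}_{u,\kappa}$, the algorithm resets five variables and then makes three sequential scans over the edges bucketed into $C$: one pass to extract the (at most two) extreme edges as the angular minimum and maximum, one pass to extract the (at most two) penultimate edges among the remainder, and one pass to extract the single shortest remaining edge as the middle edge, each pass doing $O(1)$ work per edge. Since every Delaunay edge incident to $u$ lies in exactly one cone at $u$, the total work over all cones of $u$ is $O(\deg_{\mathcal{DT}}(u) + \kappa)$, the additive $\kappa$ covering the resets at empty cones; summing over $u$ gives $O(|E(\mathcal{DT})| + n\kappa) = O(n)$. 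Marking an edge as protected by an endpoint is $O(1)$, and the final loop scanning all $O(n)$ Delaunay edges to retain exactly those marked by both endpoints is $O(n)$. Adding up, the construction takes $O(n\log n)$ in general and $O(n)$ when $\mathcal{DT}(V)$ is given.

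I do not expect a real obstacle: correctness was already established in Section~\ref{ssec:algorithm}, so this is purely an accounting argument. The only points that need care are (i) invoking planarity of $\mathcal{DT}(V)$ to cap both the edge count and the degree sum by $O(n)$, so that the per-cone scans telescope to linear total cost even though most cones may be empty, and (ii) realising the ``bucket $uv$ into $C$'' step by a binary search over precomputed, sorted cone boundaries rather than a linear scan, so that bucketing does not dominate. With $\kappa = O(1)$ fixed, every line except the Delaunay construction is linear, which is exactly what the statement asserts.
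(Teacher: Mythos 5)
Your proposal is correct and follows essentially the same line-by-line accounting as the paper: the Delaunay construction is the only $O(n\log n)$ step, and every subsequent loop is charged to the $O(n)$ edges of the planar triangulation with $\kappa=\Ceil{2\pi/\theta}$ treated as a constant. The binary-search refinement for bucketing is a harmless extra; the paper simply charges $O(\kappa)=O(1)$ per edge for that step and reaches the same bound.
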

\begin{proof}
    The construction of the Delaunay triangulation $\mathcal{DT}(V)$ at line \texttt{2} takes $O(n \log n)$ time.

    The loops at lines \texttt{3,5,8,20} are independent of each other.
    The one starting at line \texttt{3} takes $O(n)$ time and the one on line
    \texttt{5} takes $O(n)$ time since there are a linear number of edges in $E(\mathcal{DT})$, which we look at twice (once for each endpoint), and the bucketing of each edge takes $\kappa$ time at most.
    The loop starting at line \texttt{8} takes $O(n)$ time since there are a linear number of edges in $E(\mathcal{DT})$, which we look at six times at most (thrice for each endpoint).
    Finally, the loop at line \texttt{20} takes $O(n)$ time since there are a linear number of edges in $E(\mathcal{DT})$.

    The result follows that
    $\mathcal{BDG}(V)$ takes $O(n\log n)$ time to construct and
    $\mathcal{BDG}(V)$ takes $O(n)$ time to construct if the input is a Delaunay triangulation $\mathcal{DT}(V)$ on $V$. 
\end{proof}

Putting the results from this section together, using that the Delaunay triangulation is a $1.998$-spanner~\cite{xia2013stretch}, and observing that $\mathcal{BDG}(V)$ is trivially planar since it is a subgraph of the Delaunay triangulation, we obtain:

\begin{corollary}\label{cor:desideratum1}
    Given a set $V$ of $n$ points in the plane and a parameter $0 < \theta < \pi/2$, one can in $O(n \log n)$ time compute a graph $\mathcal{BDG}(V)$ that is a planar $\tau$-spanner having degree at most $5\lceil 2\pi/\theta \rceil$, where $\tau=1.998 \cdot \max(\pi/2,\pi \sin(\theta/2) + 1)$.
\end{corollary}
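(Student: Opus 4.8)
The plan is to assemble the corollary from three ingredients already in hand: the relative spanning ratio of Theorem~\ref{thm:bdg_del}, the known spanning ratio of the Delaunay triangulation, and the structural facts (planarity, degree, running time) that follow directly from the construction in Section~\ref{ssec:algorithm}.

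First I would promote the per-edge guarantee of Theorem~\ref{thm:bdg_del} to a genuine spanning ratio of $\mathcal{BDG}(V)$ with respect to $\mathcal{DT}(V)$. Fix any two points $u,w\in V$ and a shortest path $P$ between them in $\mathcal{DT}(V)$. Every edge $ab$ of $P$ is a Delaunay edge, so Theorem~\ref{thm:bdg_del} supplies a path in $\mathcal{BDG}(V)$ from $a$ to $b$ of length at most $\max(\pi/2,\pi\sin(\theta/2)+1)\cdot|ab|$ (if $ab$ is already present in $\mathcal{BDG}(V)$ this is immediate). Concatenating these paths edge by edge along $P$ yields a walk in $\mathcal{BDG}(V)$ from $u$ to $w$ of length at most $\max(\pi/2,\pi\sin(\theta/2)+1)\cdot|P| = \max(\pi/2,\pi\sin(\theta/2)+1)\cdot d_{\mathcal{DT}(V)}(u,w)$. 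Hence $d_{\mathcal{BDG}(V)}(u,w)\le \max(\pi/2,\pi\sin(\theta/2)+1)\cdot d_{\mathcal{DT}(V)}(u,w)$ for all $u,w\in V$. I would then compose this with the bound of Xia~\cite{xia2013stretch}, namely $d_{\mathcal{DT}(V)}(u,w)\le 1.998\cdot|uw|$ for every pair $u,w\in V$: chaining the two inequalities gives $d_{\mathcal{BDG}(V)}(u,w)\le 1.998\cdot\max(\pi/2,\pi\sin(\theta/2)+1)\cdot|uw| = \tau\cdot|uw|$, so $\mathcal{BDG}(V)$ is a $\tau$-spanner of the complete Euclidean graph on $V$.

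Finally I would dispatch the remaining properties. Planarity is inherited because $\mathcal{BDG}(V)\subseteq\mathcal{DT}(V)$ and the Delaunay triangulation is plane. For the degree bound, recall that an edge survives exactly when it is fully protected; at a fixed vertex $u$, within each of the $\kappa=\lceil 2\pi/\theta\rceil$ cones of $\mathcal{C}_{u,\kappa}$ at most two extreme, two penultimate, and one middle edge are retained, so $u$ keeps at most $5\kappa$ incident edges. For the running time, computing $\mathcal{DT}(V)$ takes $O(n\log n)$ time, and identifying and deleting the non-protected edges takes $O(n)$ time (see Section~\ref{sec:construction1}); the Delaunay step dominates, giving $O(n\log n)$ total.

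The only mildly delicate point in the whole argument is the first step — verifying that the ``for every Delaunay edge there is a short detour in $\mathcal{BDG}(V)$'' statement of Theorem~\ref{thm:bdg_del} really upgrades to a global \emph{multiplicative} spanning ratio against graph distances in $\mathcal{DT}(V)$ — but this is exactly the standard edge-by-edge patching argument sketched above, so I do not anticipate a genuine obstacle; everything else is bookkeeping over the construction.
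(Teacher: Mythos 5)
Your proposal is correct and follows essentially the same route as the paper: the paper likewise obtains the corollary by composing Theorem~\ref{thm:bdg_del} with Xia's $1.998$ bound~\cite{xia2013stretch}, noting planarity from $\mathcal{BDG}(V)\subseteq\mathcal{DT}(V)$, and reading off the degree bound and $O(n\log n)$ construction time from Section~\ref{ssec:algorithm}. The only difference is that you spell out the edge-by-edge patching that upgrades the per-edge detour guarantee to a spanning ratio against Delaunay graph distances, a step the paper leaves implicit inside the statement of Theorem~\ref{thm:bdg_del}.
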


\section{Routing} \label{sec:routing}
In order to route efficiently on $\mathcal{BDG}(V)$, we modify Bonichon~\etal's routing algorithm~\cite{bonichon2017upper} on the Delaunay Triangulation. Given a source $s$ and a destination $t$ on the Delaunay triangulation $\mathcal{DT}(V)$, we assume without loss of generality that the line segment $[st]$ is horizontal with $s$ to the left of $t$. 
Bonichon~\etal's routing algorithm~\cite{bonichon2017upper} then works as follows: When we are at a vertex $v_i$ ($v_0 = s$), set $v_{i+1}$ to $t$ and terminate if $v_it$ is an edge in $\mathcal{DT}(V)$. Otherwise, consider the rightmost Delaunay triangle $T_i = v_i, p, q$ at $v_i$ that has a non-empty intersection with $[st]$. Denote the circumcircle $\circ(v_i,p,q)$ with $C_i$, denote the leftmost point of $C_i$ with $w_i$, and the rightmost intersection of $C_i$ and $[st]$ with $r_i$. 
\begin{itemize}
    \item If $v_i$ is encountered in the clockwise walk along $C_i$ from $w_i$ to $r_i$, set $v_{i+1}$ to $p$, the first vertex among $\{p,q\}$ encountered on this walk starting from $v_i$ (see Figure~\ref{fig:route:decision}a). 
    \item Otherwise, set $v_{i+1}$ to $q$, the first vertex among $\{p,q\}$ to be encountered in the counterclockwise walk along $C_i$ starting from $v_i$ (see Figure~\ref{fig:route:decision}b). 
\end{itemize}

\begin{figure}[ht] 
    \centering
    \includegraphics{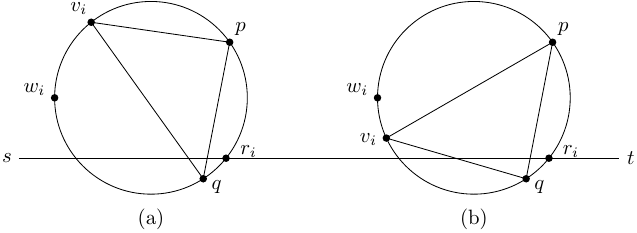}
    \caption{The routing choice: (a) At $v_i$ we follow the edge to $p$. (b) At $v_i$ we follow the edge to $q$.}
    \label{fig:route:decision}
\end{figure}

We relax Bonichon~\etal's routing algorithm~\cite{bonichon2017upper} in such a way that it no longer necessarily uses the rightmost intersected triangle: At $v_0$, we set $A_0 = T_0$; at $v_{i}$ for $i > 0$, we will find a Delaunay triangle $A_i$ based on the Delaunay triangle $A_{i-1} = v_{i-1}, x, y$ used in the routing decision at $v_{i-1}$, where one of $x$ or $y$ is $v_i$.

Let $A_i = v_i, p, q$ be any Delaunay triangle with a non-empty intersection with $[st]$ to the right of the intersection of $A_{i-1}$ with $[st]$ and which, moreover, satisfies the condition that if $v_i$ is above $[st]$, then, when making a counterclockwise sweep centred at $v_i$ starting from $v_iv_{i-1}$, we encounter $v_iq$ before $v_ip$, with $v_iq$ intersecting $[st]$ and $v_ip$ not intersecting $[st]$. Figure~\ref{fig:down_up} illustrates two concrete examples of $A_i$ given $A_{i-1}$. An analogous statement for choosing $A_i$ holds when $v_i$ lies below $[st]$, sweeping in clockwise direction. 

\begin{figure}[ht] 
    \centering
    \includegraphics{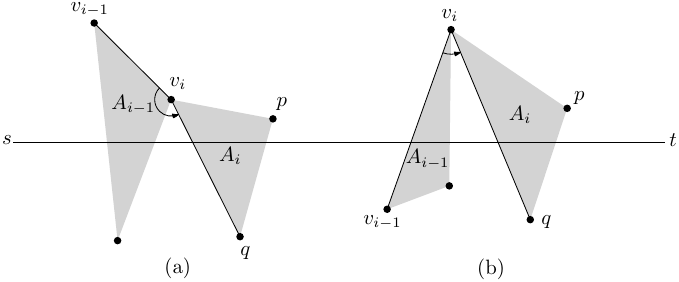}
    \caption{Candidate triangles $A_i$ given $A_{i-1}$: (a) when $[st]$ is not crossed when moving from $v_{i-1}$ to $v_i$, (b) when $[st]$ is crossed when moving from $v_{i-1}$ to $v_i$.}
    \label{fig:down_up}
\end{figure}

We note that these triangles $A_i$ always exist, since the rightmost Delaunay triangle intersecting $[st]$ is a candidate. Furthermore, the triangles occur in order along $[st]$ by definition. This implies that the relaxation of Bonichon~\etal's routing algorithm~\cite{bonichon2017upper} terminates. 

\begin{theorem}\label{thm:ccra_route}
    The relaxation of Bonichon~\etal's routing algorithm~\cite{bonichon2017upper} on the Delaunay triangulation is 1-local and  has a routing ratio of at most $(1.185043874 + 3\pi/2) \approx 5.90$.
\end{theorem}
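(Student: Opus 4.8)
The plan is to analyze the modified routing algorithm by comparing it, step by step, to the original algorithm of Bonichon~\etal~\cite{bonichon2017upper}, whose routing ratio of $(1.185043874 + 3\pi/2)$ on the Delaunay triangulation is already established. The key idea is that the competitive analysis in~\cite{bonichon2017upper} is essentially \emph{local}: the bound on the length of the path is obtained by summing, over each step, a charge that depends only on the circumcircle $C_i$ of the triangle used at $v_i$ and on how far along $[st]$ the algorithm has progressed. Concretely, I would first recall the structure of that analysis: the path length is decomposed so that the portion of the path near each visited circumcircle $C_i$ is bounded in terms of the horizontal extent of $C_i$ along $[st]$ together with an ``ellipse-type'' potential measured from the current point to $t$, and these charges telescope to give the stated constant times $|st|$.

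Next, I would verify that every structural property of the sequence of triangles that this analysis relies on is still satisfied by our modified choice of triangles $A_i$. The two properties I expect to need are: (i) the triangles $A_i$ intersect $[st]$ in strictly left-to-right order (stated in the excerpt, which guarantees termination and monotone progress along $[st]$); and (ii) at each step the chosen next vertex $v_{i+1}$ is the first of $\{p,q\}$ encountered in the appropriate rotational sweep, so that the edge $v_iv_{i+1}$ behaves exactly as the edge chosen by the original rule with respect to $C_i$ — i.e.\ $v_{i+1}$ lies on the correct arc of $C_i$ and the per-step geometric inequality of~\cite{bonichon2017upper} (relating $|v_iv_{i+1}|$ plus the residual potential at $v_{i+1}$ to the residual potential at $v_i$ plus the horizontal slice of $C_i$) goes through unchanged. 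I would show that our definition of $A_i$ (using the triangle $A_{i-1}$ from the previous step rather than the globally rightmost intersected triangle) still produces, at $v_i$, a triangle and an edge satisfying exactly the hypotheses that inequality needs; the only thing that changes is \emph{which} admissible triangle is used, and since the per-step bound holds for \emph{any} triangle meeting those hypotheses, summing still yields the same constant.

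I would then handle the first and last steps separately, as in the original proof: $A_0 = T_0$ is the rightmost intersected triangle at $s$, so the initial step is identical to~\cite{bonichon2017upper}; and the terminating step, taken when $v_it$ is a Delaunay edge, contributes exactly $|v_it|$, which the telescoped potential argument accounts for. Finally I would note 1-locality: the decision at $v_i$ requires only $v_{i-1}$ (passed in the $O(1)$-size header), the coordinates of $s$ and $t$, and $\N(v_i)$ — from which the relevant Delaunay triangles incident to $v_i$, their circumcircles, and the rotational order around $v_i$ can all be reconstructed — so no non-local information is used and the header stays $O(1)$.

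The main obstacle I anticipate is the second step: carefully checking that replacing ``rightmost intersected triangle'' with ``next triangle after $A_{i-1}$'' does not break the case analysis of~\cite{bonichon2017upper}. In particular I must confirm that the vertex $v_{i+1}$ selected by our sweep rule always lies on the same arc of $C_i$ (relative to the leftmost point $w_i$ and the rightmost intersection $r_i$ of $C_i$ with $[st]$) that the original rule would select, so that the sign and magnitude of each per-step charge are preserved; and that the monotonicity property (ii) of the $A_i$'s, combined with the fact that $v_i$ lies on $C_i$, prevents the potential from ever increasing. If these two facts hold — and the excerpt's remark that ``the rightmost Delaunay triangle is a candidate'' strongly suggests they do — then the rest is a direct transcription of the existing proof, and the constant is inherited verbatim.
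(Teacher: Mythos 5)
Your proposal matches the paper's proof in both structure and substance: the paper likewise argues that the only places Bonichon~et al.'s analysis uses the ``rightmost intersected triangle'' property are (a) termination via the left-to-right ordering of the triangles along $[st]$ and (b) the per-step case analysis over the circle geometry, and then verifies both for the sequence $A_i$. The arc-membership check you flag as your main anticipated obstacle is exactly what the paper carries out as its Worst Case Circle categorisation (Observation~\ref{obs:routeorder} and Lemma~\ref{lem:routecat}): since exactly one of $p,q$ is encountered on each walk along $C_i$ from $v_i$ to $r_i$, the circle $C'_i$ still falls into one of the three types $X_1$, $X_2$, $Y$, and the constant is inherited verbatim.
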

\begin{proof}
    The 1-locality follows by construction. The proof for the routing ratio of Bonichon~\etal's routing algorithm~\cite{bonichon2017upper} holds for 
    its relaxed version, since the only parts of their proof using the property that $T_i$ is rightmost are: 
    \begin{enumerate}
        \item The termination of the algorithm (which we argued above).
        \item The categorisation of the Worst Case Circles of Delaunay triangles $T_i$ into three mutually exclusive cases (which we discuss next).
    \end{enumerate}
    Thus, the relaxation of Bonichon~\etal's routing algorithm~\cite{bonichon2017upper} on the Delaunay triangulation has a routing ratio of at most $(1.185043874 + 3\pi/2) \approx 5.90$. 
\end{proof}

\subsection{Worst Case Circles}
In the analysis of the routing ratio of Bonichon~\etal's routing algorithm~\cite{bonichon2017upper}, the notion of Worst Case Circles is introduced whereby the length of the path yielded by the algorithm is bounded above by some path consisting of arcs along these Worst Case Circles; this arc-path is then shown to have a routing ratio of $5.90$.

Suppose we have a candidate path, and are given a Delaunay triangle $v_i, v_{i+1}, u$ intersecting $[st]$; we denote its circumcircle by $C_i$ with centre $O_i$. 
The Worst Case Circle $C'_i$ is a circle that goes through $v_i$ and $v_{i+1}$, 
whose centre $O'_i$ is obtained by starting at $O_i$ and moving it along the perpendicular 
bisector of $[v_iv_{i+1}]$ until either $st$ is tangent to $C'_i$ or $v_i$ is the leftmost point 
of $C'_i$, whichever occurs first. 
The direction $O'_i$ is moved towards depends on the routing decision at $v_i$: 
if $v_i$ is encountered on the clockwise walk from $w_i$ to $r_i$, then $O'_i$ is moved towards this arc, 
and otherwise, $O'_i$ is moved towards the opposite direction. 
Letting $w'_i$ be the leftmost point of $C'_i$, we can categorise the Worst Case Circles into the following three mutually exclusive types (see Figure~\ref{fig:circles}): 
\begin{enumerate}
    \item Type $X_1$ : $v_i \neq w'_i$, and $[v_iv_{i+1}]$ does not cross $[st]$, and $st$ is tangent to $C'_i$.
    \item Type $X_2$ : $v_i = w'_i$ and $[v_iv_{i+1}]$ does not cross $[st]$.
    \item Type $Y$ : $v_i = w'_i$ and $[v_iv_{i+1}]$ crosses $[st]$.
\end{enumerate}

\begin{figure}[ht] 
    \centering
    \includegraphics{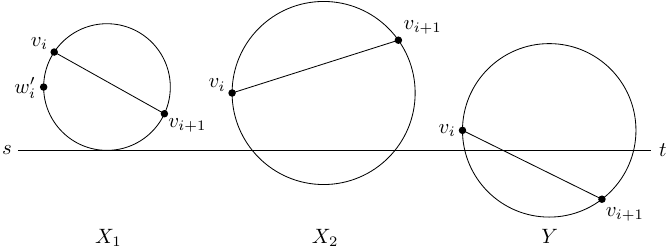}
    \caption{The three types of Worst Case Circles.}
    \label{fig:circles}
\end{figure}

Next, we show that the Worst Case Circles of Delaunay triangles $A_i$ fall into the same categories. 
Let $C_i$ be the circumcircle of $A_i$ centred at $O_i$, let $w_i$ be the leftmost point of $C_i$, and let $r_i$ be the right intersection of $C_i$ with $[st]$. 
We begin with the following observation which follows from how the criteria forces $A_i$ to intersect $[st]$: 

\begin{observation}\label{obs:routeorder}
    Let $A_i = v_i, p, q$. 
    Taking a clockwise walk along $C_i$ from $v_i$ to $r_i$, exactly one of $p$ or $q$ is encountered. An analogous statement holds for the counterclockwise walk. 
\end{observation}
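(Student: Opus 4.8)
The plan is to prove Observation~\ref{obs:routeorder} by analysing the geometric constraints that the definition of $A_i$ places on the triangle $\Delta v_i, p, q$ relative to the segment $[st]$. Recall that $A_i$ is chosen so that it has a non-empty intersection with $[st]$, and (assuming $v_i$ is above $[st]$, the other case being symmetric) the counterclockwise sweep centred at $v_i$ starting from $v_iv_{i-1}$ encounters $v_iq$ before $v_ip$, with $v_iq$ crossing $[st]$ and $v_ip$ not crossing $[st]$. The first thing I would record is that, since $\Delta v_i, p, q$ intersects $[st]$ but $v_i$ is strictly above $[st]$, at least one of $p, q$ must lie on or below $[st]$; combined with the sweep condition, it is $q$ that lies on the far side (the edge $v_iq$ crosses $[st]$), while $p$ lies strictly above $[st]$ or the edge $v_ip$ would cross it. So on the segment $[st]$, the triangle enters through the side $v_iq$.

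Next I would pin down which arc of $C_i$ the right intersection point $r_i$ lies on. Since $C_i$ is the circumcircle of $\Delta v_i, p, q$, the chord $v_iq$ splits $C_i$ into two arcs, and $p$ lies on one of them; the third vertex $p$ and the point $r_i$ are on opposite sides of the line through $v_iq$, because $p$ is above $[st]$ and $r_i \in [st]$ while $v_iq$ crosses $[st]$ — more carefully, $r_i$ is the rightmost point of $C_i \cap [st]$, and since $[st]$ enters the disc bounded by $C_i$ through the chord cut off by the segment $v_iq$, the point $r_i$ lies on the arc of $C_i$ cut off by the chord $v_iq$ that does \emph{not} contain $p$. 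I would then walk clockwise along $C_i$ from $v_i$: we immediately traverse the arc from $v_i$ that does not contain $p$ (heading toward the lower part of the circle), and on this arc we meet $r_i$; the question is whether we meet $q$ before or after $r_i$. Because $v_iq$ crosses $[st]$, the vertex $q$ is below $[st]$, hence $q$ lies on the same sub-arc (below $[st]$) where $r_i$ also lies, and $p$ — being above $[st]$ — is on the complementary arc. The clockwise walk from $v_i$ to $r_i$ therefore stays on the arc not containing $p$, and exactly one of $\{p,q\}$, namely $q$, is on that arc between $v_i$ and $r_i$ (or possibly past $r_i$ but still on the walk-traversed portion before returning to $v_i$) — in any case, exactly one of $p$ or $q$ is encountered, and it is $q$. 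For the counterclockwise walk from $v_i$ to $r_i$ the roles are reversed: we traverse the arc containing $p$, meet $p$, and do not meet $q$ before reaching $r_i$. The analogous reasoning with ``clockwise'' and ``counterclockwise'' swapped handles $v_i$ below $[st]$.

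The step I expect to be the main obstacle is making the arc bookkeeping airtight: I need to verify that the point $r_i$ genuinely lies on the open arc of $C_i$ cut off by the chord $v_iq$ on the side away from $p$, and that $q$ itself lies on the closed arc traversed by the clockwise walk from $v_i$ before it reaches $v_i$ again after $r_i$ — i.e. that the walk does not ``overshoot.'' This hinges on the facts that (i) $p$ is strictly above $[st]$, (ii) $q$ is strictly below $[st]$ (or on it, an easily handled degenerate case), (iii) $v_i$ is strictly above $[st]$, and (iv) $[st]$ meets the disc of $C_i$ in a chord whose endpoints on $C_i$ are the two intersection points, the rightmost being $r_i$. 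Together these force the cyclic order around $C_i$ to be, reading clockwise from $v_i$: first $q$, then $r_i$ (on the lower arc), then $p$ (on the upper arc). Once that cyclic order is established, the observation is immediate, and the counterclockwise statement and the below-$[st]$ case follow by the same argument with orientations reversed. I would also note that the non-degeneracy of the triangle and the fact that $[st]$ is not tangent to $C_i$ (it properly intersects, since the intersection with the triangle is non-empty and two-dimensional) rule out the boundary cases where two of these points coincide.
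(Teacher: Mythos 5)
The paper offers no proof of Observation~\ref{obs:routeorder} at all---it is asserted to follow from how the criteria force $A_i$ to intersect $[st]$---so this is an audit of your argument rather than a comparison, and your argument has a genuine gap. You correctly place $q$ on the arc of $C_i$ below the line $st$ and $p$ on the arc above it (indeed, since $v_ip$ is a chord of $C_i$ and $s,t$ lie outside the empty circumcircle, $v_ip$ could only cross the line $st$ inside the disc, hence on $[st]$; so $p$ is on $v_i$'s side). But the two walks in the observation are the arcs determined by $v_i$ and $r_i$, not by the two intersections of $[st]$ with $C_i$: writing $l_i$ for the left intersection, the clockwise walk from $v_i$ to $r_i$ is exactly the piece of the upper arc between $v_i$ and $r_i$, while the counterclockwise walk contains the rest of the upper arc \emph{and the entire lower arc}. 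So $q$ always lies on the counterclockwise walk, but knowing only that $p$ is above $[st]$ does not decide which walk $p$ is on. Concretely, take $C_i$ the unit circle with $l_i=(-1,0)$, $r_i=(1,0)$, $v_i$ at angle $60^\circ$, $p$ at $120^\circ$, $q$ at $270^\circ$: then $v_iq$ and $pq$ cross $[st]$, $v_ip$ does not, $v_i$ and $p$ are above and $q$ below, yet the clockwise walk from $v_i$ to $r_i$ meets neither $p$ nor $q$. All four facts (i)--(iv) that you list as sufficient hold in this configuration, so they cannot suffice. Moreover, the claim you single out as the main obstacle---that $r_i$ lies on the arc of $C_i$ cut off by the chord $v_iq$ on the side \emph{away} from $p$---is the opposite of what you need: if it held, the cyclic order would be $v_i,r_i,q,p$ and the observation would fail.

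The missing ingredient is the one part of the definition of $A_i$ that your argument never uses: $A_i$ must intersect $[st]$ to the right of $A_{i-1}\cap[st]$, and $A_{i-1}$ is also a triangle incident to $v_i$. The triangles incident to $v_i$ that cross $[st]$ occupy pairwise disjoint angular sectors at $v_i$, appearing in counterclockwise order around $v_i$ as one moves rightward along $[st]$; since $A_{i-1}$ covers a piece of $[st]$ strictly to the left of $A_i$'s piece and $v_ip$ does not reach $[st]$, the edge $v_iq$ must be the \emph{entering} (left) crossing edge of $A_i$ and $pq$ the exiting one. (In the bad configuration above there is no room for such an $A_{i-1}$: its sector would have to sit inside $A_i$'s sector.) This is what forces the crossing of $pq$ with $[st]$, and hence $r_i$, to lie on the $p$-side of the line through $v_iq$, giving the clockwise cyclic order $v_i,p,r_i,q$ and the observation. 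Finally, note your orientation is reversed relative to the paper's usage: walking clockwise from a point on the upper arc moves along the upper arc toward $r_i$, so it is $p$, not $q$, that the clockwise walk encounters; this does not affect the truth of the ``exactly one on each walk'' statement once the separation is established, but it would matter wherever the observation is applied.
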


This observation captures the necessary property that allows the categorisation to go through. 
We denote the Worst Case Circle of $A_i$ by $C'_i$ with centre $O'_i$, and leftmost point~$w'_i$. 

\begin{lemma}\label{lem:routecat}
$C'_i$ can be categorised into the following three mutually exclusive types: 
\begin{enumerate}
    \item Type $X_1$ : $v_i \neq w'_i$, and $[v_iv_{i+1}]$ does not cross $[st]$, and $st$ is tangent to $C'_i$.
    \item Type $X_2$ : $v_i = w'_i$ and $[v_iv_{i+1}]$ does not cross $[st]$.
    \item Type $Y$ : $v_i = w'_i$ and $[v_iv_{i+1}]$ crosses $[st]$.
\end{enumerate}
\end{lemma}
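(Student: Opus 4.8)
The plan is to reduce Lemma~\ref{lem:routecat} to the already-established categorisation of Worst Case Circles for the rightmost-triangle algorithm, using Observation~\ref{obs:routeorder} as the bridge. The key insight is that the three types $X_1$, $X_2$, $Y$ are defined purely in terms of the geometric configuration of $v_i$, $v_{i+1}$, the line $st$, and the Worst Case Circle $C'_i$ — and $C'_i$ itself is obtained from $C_i$ by a construction (sliding the centre along the perpendicular bisector of $[v_iv_{i+1}]$ until $st$ becomes tangent or $v_i$ becomes leftmost) that depends only on $C_i$, on the segment $[v_iv_{i+1}]$, and on the direction determined by the routing decision at $v_i$. So the content of the lemma is really: (i) the three types are genuinely mutually exclusive, and (ii) every Worst Case Circle arising from an $A_i$ falls into one of them. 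Part (i) is immediate from the defining conditions: $X_1$ requires $v_i \neq w'_i$ while $X_2$ and $Y$ require $v_i = w'_i$, and $X_2$ versus $Y$ is decided by whether $[v_iv_{i+1}]$ crosses $[st]$. So the real work is the exhaustiveness in part (ii), plus checking that the $st$-tangency in $X_1$ is forced.

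First I would recall the stopping rule in the construction of $O'_i$: we slide until \emph{either} $st$ is tangent to $C'_i$ \emph{or} $v_i$ is the leftmost point of $C'_i$, whichever comes first. This dichotomy alone gives: if sliding stopped because $v_i$ became leftmost, then $v_i = w'_i$ and we are in case $X_2$ or $Y$ (distinguished by the crossing condition); if sliding stopped because of tangency first, then $st$ is tangent to $C'_i$, and either $v_i = w'_i$ happened to hold simultaneously (again $X_2$ or $Y$) or $v_i \neq w'_i$, which — provided $[v_iv_{i+1}]$ does not cross $[st]$ — is exactly $X_1$. Hence the only thing that could break exhaustiveness is the possibility that $st$ is tangent to $C'_i$, $v_i \neq w'_i$, \emph{and} $[v_iv_{i+1}]$ crosses $[st]$ simultaneously; I would rule this out by a geometric argument showing that if the chord $[v_iv_{i+1}]$ crosses $[st]$ and $st$ is tangent to the circle, then the tangency point forces $v_i$ (the endpoint on the ``leftmost'' side, by the direction rule) to actually be the leftmost point. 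This is the same elimination Bonichon~\etal{} carry out for the rightmost triangle, and the point is that it goes through verbatim once we know Observation~\ref{obs:routeorder}.

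The role of Observation~\ref{obs:routeorder} is precisely to supply the one structural fact about $A_i$ that Bonichon~\etal's proof used about $T_i$: that from $v_i$, walking clockwise along $C_i$ to $r_i$ one meets exactly one of $\{p, q\}$, and symmetrically counterclockwise. In their analysis this ``sidedness'' of the triangle relative to the walk $w_i \to r_i$ is what pins down which direction $O'_i$ slides and therefore which of the three types can occur; since $A_i$ satisfies the same property by Observation~\ref{obs:routeorder}, every step of their case analysis of the sliding process — in particular, the identification of when tangency must precede ``$v_i$ leftmost'' and vice versa — applies unchanged. So the proof I would write is: state that types are mutually exclusive by inspection of the defining clauses; invoke the stopping dichotomy to get that each $C'_i$ is one of the two stopping outcomes; use Observation~\ref{obs:routeorder} to reproduce Bonichon~\etal's argument that tangency-with-$v_i \neq w'_i$ cannot coexist with $[v_iv_{i+1}]$ crossing $[st]$; conclude exhaustiveness.

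The main obstacle I anticipate is not the logic of the case split but verifying carefully that Observation~\ref{obs:routeorder} really is \emph{sufficient} — i.e., that nowhere in Bonichon~\etal's categorisation argument did they secretly use more than this sidedness property of the rightmost triangle (for instance, using that \emph{no other} Delaunay triangle intersects $[st]$ to the right, or minimality of the circumcircle). I would therefore want to trace their proof of the corresponding statement and confirm each appeal to ``$T_i$ rightmost'' is either (a) already accounted for by the termination argument given before Theorem~\ref{thm:ccra_route}, or (b) reducible to Observation~\ref{obs:routeorder}. If some appeal is genuinely stronger, the fallback is to prove the needed weaker consequence directly from the defining properties of $A_i$ (that it intersects $[st]$ to the right of $A_{i-1}$'s intersection, and the sweep condition on $v_iq$ vs.\ $v_ip$), which is exactly the kind of local combinatorial-geometric fact these criteria were engineered to guarantee.
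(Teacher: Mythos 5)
Your overall strategy---reduce the categorisation to Bonichon~\etal's, with Observation~\ref{obs:routeorder} supplying the one structural property of $A_i$ that their argument needed from ``$T_i$ is rightmost''---is exactly the paper's, and your treatment of mutual exclusivity and of the non-crossing case matches the paper's one-line handling. The problem is where you locate the remaining work. You identify the residual case as ``$st$ tangent to $C'_i$, $v_i \neq w'_i$, and $[v_iv_{i+1}]$ crosses $[st]$'' and propose a geometric argument to eliminate it. That configuration is vacuously impossible: if $st$ is tangent to $C'_i$, the whole circle, and hence the chord $[v_iv_{i+1}]$ joining two of its points, lies in one closed half-plane of the line through $s$ and $t$, so the chord cannot cross $[st]$. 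Ruling it out establishes nothing.

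The case that actually needs an argument is the crossing case itself: there you must show that the sliding of $O'_i$ terminates with $v_i = w'_i$ at all. When $[v_iv_{i+1}]$ crosses $[st]$, the circle always meets the line $st$ in two points, so the tangency stopping condition can never fire; if the prescribed sliding direction were such that $v_{i+1}$, rather than $v_i$, tends toward the leftmost position, the Worst Case Circle would not be well defined and the type-$Y$ classification would fail. This is where the paper spends its entire effort: taking $v_i$ above $[st]$ without loss of generality, it uses Observation~\ref{obs:routeorder} to conclude that $v_i$ lies on the counterclockwise walk along $C_i$ from $w_i$ to $r_i$ (otherwise neither $p$ nor $q$ would be encountered on the clockwise walk from $v_i$ to $r_i$), which pins down the routing decision and hence the sliding direction, and then argues that $w'_i$, starting at $w_i$, must move onto $v_i$. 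Your parenthetical ``the endpoint on the leftmost side, by the direction rule'' shows you sense this is the crux, but your written plan never engages with it; to repair the proof, replace the elimination of the tangent-and-crossing configuration with this direct argument that the crossing case forces $v_i = w'_i$.
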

\begin{proof}
    If $[v_iv_{i+1}]$ does not cross $[st]$, $C'_i$ is clearly of type $X_1$ or $X_2$.

    Consider when $[v_iv_{i+1}]$ crosses $[st]$. 
    Without loss of generality, let $v_i$ be above $[st]$ and $v_{i+1}$ be below $[st]$. 
    By Observation~\ref{obs:routeorder}, $v_i$ occurs on the counterclockwise walk around $C_i$ from $w_i$ to $r_i$, 
    for if not, neither vertex of $A_i$ occurs on the clockwise walk around $C_i$ from $v_i$ to $r_i$. 
    Since $v_i$ is above $[st]$, it lies above the leftmost intersection of $C_i$ with $[st]$ and below~$w_i$. 
    
    Since $O'_i$ is moved along the perpendicular bisector of $[v_iv_{i+1}]$ towards the counterclockwise arc of $v_i$ to $v_{i+1}$, it must be that $w'_i$ (which starts at $w_i$ when $O'_i$ starts at $O_i$) moves onto $v_i$ eventually. 
    Thus, $C'_i$ is Type $Y$. 
\end{proof}

\subsection{Routing on $\mathcal{BDG}(V)$} \label{ssec:routing}
In order to route on $\mathcal{BDG}(V)$, we simulate the relaxation of Bonichon~\etal's routing algorithm~\cite{bonichon2017upper} described in the previous section. We first prove a property that allows us to distribute information about edges over their endpoints. 

\begin{lemma}\label{thm:evprotect}
    Every edge $uv \in \mathcal{DT}(V)$ is protected by at least one of its endpoints $u$ or $v$.
\end{lemma}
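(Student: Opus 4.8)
The goal is to show that every Delaunay edge $uv$ is protected (i.e.\ extreme, penultimate, or middle) at $u$ or at $v$. The natural strategy is to suppose for contradiction that $uv$ is protected at neither endpoint, and then extract geometric information from both sides that cannot coexist. Specifically, if $uv$ is not protected at $u$, it is in particular not extreme at $u$, so by Observation~\ref{obs:abut} the edge $uv = uv_i$ has Delaunay neighbours $v_{i-1}, v_i = v, v_{i+1}$ of $u$ all lying in a single cone $C \in \mathcal{C}_{u,\kappa}$ of angle at most $\theta$. Symmetrically, if $uv$ is not protected at $v$, then $u$ together with its two neighbours of $v$ adjacent to it in clockwise order around $v$ all lie in a single cone of angle at most $\theta$ with apex $v$.

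\textbf{Key steps.} First I would set up the two cones: cone $C_u$ with apex $u$ containing $v_{i-1}, v, v_{i+1}$, and cone $C_v$ with apex $v$ containing the analogous triple $w_{j-1}, u, w_{j+1}$ (the clockwise Delaunay neighbours of $v$ flanking $u$). Next I would apply Lemma~\ref{lem:fat} on \emph{both} sides: since $v_{i-1}, v, v_{i+1}$ are consecutive Delaunay neighbours of $u$ in a cone of angle $\theta$, the interior angle $\angle(v_{i-1}, v, v_{i+1}) \geq \pi - \theta$; likewise $\angle(w_{j-1}, u, w_{j+1}) \geq \pi - \theta$. The crux is then to observe that the Delaunay triangles incident to the edge $uv$ force a relationship between these angles and the small cone angles. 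In particular, the two triangles sharing edge $uv$ have their third vertices on opposite sides of line $uv$; one of these third vertices on each side is a Delaunay neighbour of $u$ (one of $v_{i-1}, v_{i+1}$) and simultaneously a Delaunay neighbour of $v$ (one of $w_{j-1}, w_{j+1}$). I would then argue that the angle $\angle(v_{i-1}, v, v_{i+1})$, being at least $\pi - \theta > \pi/2$, together with the symmetric fact at $u$, pins down the quadrilateral $v_{i-1}, u, v_{i+1}$ (or the relevant convex polygon) in a way that makes $v_{i-1}$ and $v_{i+1}$ both lie in the cone $C_v$ of angle $\theta$ at $v$ — but then the angle they subtend at $v$ is at most $\theta < \pi - \theta$, contradicting Lemma~\ref{lem:fat}.

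\textbf{Main obstacle.} The delicate point is bookkeeping the combinatorial structure: identifying precisely which of $u$'s flanking neighbours $v_{i-1}, v_{i+1}$ coincide with $v$'s flanking neighbours $w_{j-1}, w_{j+1}$, and confirming that they straddle $uv$ as claimed. This relies on the fact that $v_{i-1}, v, v_{i+1}$ being consecutive around $u$ means $v_{i-1}v$ and $vv_{i+1}$ are Delaunay edges bounding the two faces at $uv$; I would need to verify this carefully, handling the degenerate boundary cases (few neighbours) using the parenthetical remarks after Observation~\ref{obs:abut}. Once the incidence structure is established, the angle contradiction via Lemma~\ref{lem:fat} applied twice is the clean finishing move. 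I expect the whole argument to be short once the right picture — two overlapping small cones at $u$ and $v$, sharing the straddling vertices — is drawn.
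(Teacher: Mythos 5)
Your proposal is correct and follows essentially the same route as the paper: Observation~\ref{obs:abut} gives three consecutive neighbours $v_{i-1},v,v_{i+1}$ of $u$ in one cone, these two flanking vertices are precisely the Delaunay neighbours of $v$ adjacent to $u$ (the third vertices of the two triangles on $uv$), and Lemma~\ref{lem:fat} makes the angle they subtend at $v$ too wide for all of $v_{i-1},u,v_{i+1}$ to fit in a single cone at $v$. The paper runs the argument one-directionally (not protected at $u$ forces $uv$ to be \emph{extreme} at $v$) and thus invokes Lemma~\ref{lem:fat} only once, so your second, symmetric application at $u$ is redundant but harmless.
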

\begin{proof}
    Suppose that $uv$ is not protected at $u$.
    Then $uv$ is not extreme at $u$ and thus by Observation~\ref{obs:abut}, $u$ must have consecutive clockwise-ordered Delaunay neighbours $v_l, v, v_r$.
    By Lemma~\ref{lem:fat}, $\angle(v_l, v, v_r) \geq \pi - \theta > \theta$ since $0 < \theta < \pi/2$, and thus $v_l$ and $v_r$ cannot both belong to the same cone with apex $v$ and angle at most $\theta$.
    Since $v_r, u, v_l$ are consecutive clockwise-ordered Delaunay neighbours of $v$, and $vv_l$ and $vv_r$ cannot be in the same cone, it follows that $vu$ is extreme at $v$.
    Hence, $uv$ is protected at $v$ when it is not protected at $u$. 
\end{proof}

This lemma allows us to store all edges of the Delaunay triangulation by distributing them over their endpoints. At each vertex $u$, we store: 
\begin{enumerate}
    \item Fully protected edges $uv$, with two additional bits to denote whether it is extreme, penultimate, or middle at $u$.
    \item Semi-protected edges $uv$ (only protected at $u$), with one additional bit denoting whether the clockwise or counterclockwise face path is a spanning path to $v$.
\end{enumerate}

We can label the vertices of $\mathcal{BDG}(V)$ in this way, denoting this augmented graph as a Marked Bounded Degree Graph or $\mathcal{MBDG}(V)$ for short. Pseudocode and its running time analysis can be found in Appendix~\ref{app:MBDG}.

\begin{theorem}\label{thm:mbdg1}
    $\mathcal{MBDG}(V)$ stores $O(1)$ words of information at each of its vertices.
\end{theorem}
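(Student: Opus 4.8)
The plan is to show that the total information stored at each vertex $u$ of $\mathcal{MBDG}(V)$, as described in the two-item list preceding the statement, occupies only $O(1)$ words. First I would observe that the degree of $\mathcal{BDG}(V)$ is at most $5\kappa = 5\lceil 2\pi/\theta \rceil$ (as established in Section~\ref{ssec:algorithm} and Corollary~\ref{cor:desideratum1}), so the number of fully protected edges incident to $u$ is at most $5\kappa$, a constant depending only on the fixed parameter $\theta$. Each such edge contributes the identity (equivalently, the coordinates) of its other endpoint together with two bits recording whether it is extreme, penultimate, or middle at $u$; this is $O(1)$ words per edge and hence $O(1)$ words in total for item~1.

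For item~2 (the semi-protected edges, i.e.\ edges $uv \in \mathcal{DT}(V)$ protected at $u$ but not at $v$), the key point is that these edges are themselves protected at $u$, so there are again at most $5\kappa$ of them incident to $u$: they are a subset of the at most $5\kappa$ edges marked as protected by $u$ during the construction. By Lemma~\ref{thm:evprotect} every Delaunay edge is protected by at least one endpoint, so this storage scheme indeed covers all Delaunay edges, but that is a correctness remark; for the counting bound all that matters is that each semi-protected edge stored at $u$ is one of $u$'s at most $5\kappa$ protected edges. Each contributes the other endpoint plus one bit indicating whether the clockwise or counterclockwise face path around the relevant face is the spanning path to $v$, again $O(1)$ words per edge.

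Summing the two contributions, vertex $u$ stores at most $10\kappa$ edge records, each of constant size, for a total of $O(\kappa) = O(1)$ words, since $\kappa$ is a constant fixed by the input parameter $\theta$. I would phrase the final bound as $O(1)$ with the understanding (already implicit throughout the paper) that the constant hidden in the $O(\cdot)$ may depend on $\theta$.

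I do not anticipate a serious obstacle here; the only subtlety worth stating explicitly is why the \emph{semi-protected} edges incident to a vertex are bounded in number. One might worry that a vertex $v$ could be an endpoint of many edges that are semi-protected \emph{at the other endpoint} and hence forced onto $v$'s list — but that is not how the scheme works: each semi-protected edge $uv$ is stored only at the endpoint ($u$) at which it is protected, and the number of edges protected at any fixed vertex is at most $5\kappa$ by the degree bound argument of Section~\ref{ssec:algorithm}. Making this point precise is the one place the proof needs a sentence of care; everything else is immediate bookkeeping.
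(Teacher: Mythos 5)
Your proof is correct and takes essentially the same approach as the paper, which simply counts at most $5\kappa$ fully protected edges (two bits each) and at most $2\kappa$ semi-protected edges (one bit each) per vertex, with $\kappa$ a constant. The only difference is the constant for item~2: the paper uses the sharper bound $2\kappa$ (a semi-protected edge must be \emph{extreme} at the endpoint storing it, since by Lemma~\ref{lem:pensmall} a penultimate or middle edge would be fully protected), whereas you use the looser but equally valid bound that such edges form a subset of the at most $5\kappa$ edges protected at that vertex.
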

\begin{proof}
    According to Corollary~\ref{cor:desideratum1}, each vertex in $\mathcal{BDG}(V)$ is incident to at most $5\kappa$ (fully and semi) protected edges, where $\kappa$ is a fixed constant. From the above discussion a vertex may store 2 bits for each incident protected edge in $\mathcal{MBDG}(V)$, which immediately proves the theorem
\end{proof}

In the remaining part of Section~\ref{ssec:routing} we will focus our attention on routing in $\mathcal{MBDG}(V)$. When we write ``an edge is followed" or ``walking along a face" or any statement of that sort, this is always done in $\mathcal{MBDG}(V)$ using only the information stored in each vertex unless otherwise stated. At a high level, the routing algorithm on $\mathcal{MBDG}(V)$ works as follows: the simulation searches for a suitable candidate triangle $A_i$ at $v_i$, possibly taking a walk from $v_i$ along a face to be defined later in order to do so. 
Once $A_i$ has been found, we will know the locations of $v_i, p, q$, where $p$ and $q$ are candidate vertices for $v_{i+1}$, and we can thus use the routing criteria of Bonichon~\etal's routing algorithm~\cite{bonichon2017upper} to determine whether to route to $p$ or to route to $q$. Next, we describe how to route on the non-triangular faces of $\mathcal{MBDG}(V)$; the vertices of any such face can always be labelled with $v, u_1, \dots, u_m$ where $vu_1$ and $vu_m$ are a middle edge and a penultimate edge at $v$ (see Figure~\ref{fig:face}).

\begin{figure}[ht] 
    \centering
    \includegraphics{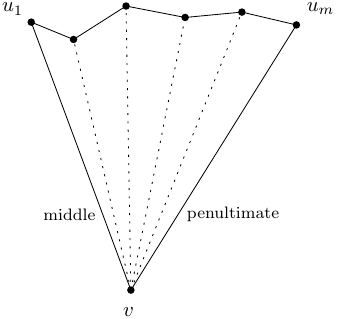}
    \caption{A non-triangular face of $\mathcal{MBDG}(V)$. Dotted edges here are unprotected at $v$.}
    \label{fig:face}
\end{figure}

There are two different situations wherein we must route on such faces. The first situation is when we want to move from $v$ to any other vertex (the destination vertex along such a face is undetermined until it is reached); we shall use Unguided Face Walks to ensure a face-route with constant stretch in such a situation. The second situation is when we want to move from any other vertex on this face to $v$; we shall use Guided Face Walks to ensure a face-route with constant stretch in such a situation

\subsubsection{Unguided Face Walks}
Suppose $vu_1$ and $vu_m$ are a middle edge and a penultimate edge and suppose that $vu_1$ is the shorter of the two. We want to route from $v$ to any other vertex $p$ on this face. For any such vertex $p$ on this face, we refer to the spanning face path from $v$ to $p$ starting with $vu_1$ as an Unguided Face Walk from $v$ to $p$.

\begin{figure}[ht] 
    \centering
    \includegraphics{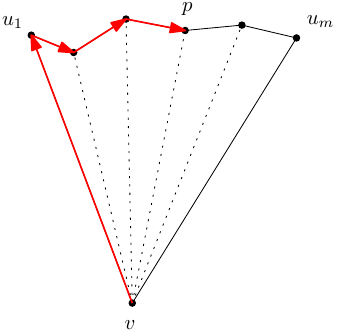}
    \caption{An Unguided Face Walk from $v$ to $p$. $vu_1$ and $vu_m$ are not labeled with ``middle" or ``penultimate" to emphasize that we can take the shorter of the two.}
    \label{fig:unguided}
\end{figure}

In the simulation, we use Unguided Face Walks in a way that $p$ is undetermined until it is reached; 
we will take an Unguided Face Walk from $v$ and test at each vertex along this walk if it satisfies some property, ending the walk if it does.  Routing in this manner from $v$ to $p$ can easily be done locally: 
Suppose $vu_1$ was counterclockwise to $vu_m$ (see Figure~\ref{fig:unguided}).  Then, at any intermediate vertex $u_{i}$, we take the edge immediately counterclockwise to $u_iu_{i-1}$ ($v = u_0$). 
The procedure when $vu_1$ is clockwise to $vu_m$ is analogous. 

\begin{observation}\label{obs:unguide2}
    An Unguided Face Walk needs $O(1)$ memory since at $u_i$, the previous vertex along the walk $u_{i-1}$ must be stored 
    in order to determine $u_{i+1}$.
\end{observation}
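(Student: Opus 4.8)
The plan is to verify that an Unguided Face Walk is a genuine $1$-local procedure whose per-hop state consists of only a constant number of vertex identifiers and flags, and that the local stepping rule never needs to consult anything beyond $\N(u_i)$. First I would recall the stepping rule stated just above the observation: the walk is initiated at $v$ along the designated edge $vu_1$ — the shorter of the middle edge and the penultimate edge in the relevant cone $C$, both of which (together with the bit telling which is shorter) are recorded at $v$ in $\mathcal{MBDG}(V)$ — and at every subsequent vertex $u_i$ the next edge $u_iu_{i+1}$ is the one immediately counterclockwise (respectively clockwise, according to a single orientation bit fixed when the walk begins) to $u_iu_{i-1}$ in the cyclic order of edges around $u_i$.

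Second, I would observe that this cyclic order of the edges incident to $u_i$ is fully determined by $\N(u_i)$, since $\N(u_i)$ is assumed to contain the coordinates of the neighbours of $u_i$; hence, given only the identity of $u_{i-1}$, the vertex $u_{i+1}$ is computable at $u_i$ without storing any history and without consulting anything outside the freely available set $\N(u_i)$. Consequently the information that must persist in the header during the walk is just: the previous vertex $u_{i-1}$ (one vertex, $O(1)$ words), the orientation bit, and whatever $O(1)$-size data the termination test performed at each visited vertex requires (for instance the identity of the originating vertex $v$, or the target of the overall route). Because the walk is a spanning \emph{path} along a single face it never revisits a vertex, so no visited-set or longer history is needed — this is the only point I would flag as needing an explicit word, and it is immediate from the definition of the walk as a face path rather than an arbitrary traversal.

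Combining these observations yields that an Unguided Face Walk uses $O(1)$ memory, as claimed. I do not expect a real obstacle here; the single thing worth making precise is the accounting convention — that ``memory'' means the size of the routing header plus $O(1)$ working space at the current vertex, and that the rotation system at $u_i$ comes for free with $\N(u_i)$, so the only genuinely carried quantity between hops is $u_{i-1}$ (plus the constant-size flags), which is exactly why the bound is $O(1)$ and not proportional to the length of the walk.
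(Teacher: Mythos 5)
Your argument is correct and matches the paper's own justification, which is given inline in the observation itself: the only state carried between hops is the previous vertex $u_{i-1}$ (plus constant-size flags), since the next edge is determined locally from $u_{i-1}$ and the cyclic order of edges at $u_i$, which comes for free with $\N(u_i)$. Your additional remarks about the orientation bit and the walk being a simple face path are consistent elaborations, not a different approach.
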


\begin{observation}\label{obs:unguide}
    An Unguided Face Walk from $v$ to $p$ has a stretch factor of at most $\max(\pi/2 , \pi \sin(\theta / 2) + 1)$ as shown in the proof of Theorem~\ref{thm:bdg_del}.
\end{observation}

\subsubsection{Guided Face Walks}
Suppose we want to route from $p$ to $v$ where $pv$ is extreme at $p$ but not protected at $v$ (i.e., it is a semi-protected edge stored at $p$). 
Then, $pv$ is a chord of some face determined by $vu_1$ and $vu_m$ where the former is a middle edge and the latter a penultimate edge. 
Moreover, recall that we stored a bit with the semi-protected edge $pv$ at $p$ indicating whether to take the edge clockwise or counterclockwise to reach $v$. 
We refer to the face path from $p$ to $v$ following the direction pointed to by these bits as the Guided Face Walk from $p$ to $v$ (see Figure~\ref{fig:guided}). 
Routing from $p$ to $v$ can now be done as follows:
\begin{enumerate}
    \item{At $p$, store $v$ in memory.}
    \item{Until $v$ is reached, if there is an edge to $v$, take it. 
        Otherwise, take the edge pointed to by the bit of the semi-protected edge to $v$.}
\end{enumerate}

\begin{figure}[ht] 
    \centering
    \includegraphics{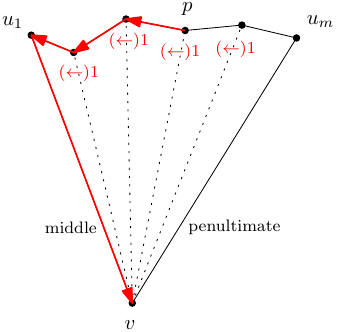}
    \caption{A Guided Face Walk from $p$ to $v$. Vertices at which an edge is semi-protected are labeled with the edge and a bit-direction.}
    \label{fig:guided}
\end{figure}

\begin{observation}\label{obs:guide2}
    A Guided Face Walk needs $O(1)$ memory since $v$ needs to be stored in memory for the duration of the walk.
\end{observation}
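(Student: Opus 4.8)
The plan is to prove Observation~\ref{obs:guide2} by exhibiting explicitly the state that a Guided Face Walk must carry between hops and checking it is a constant number of words. First I would fix the walk from $v$ to $p$ described just above the statement and observe that the only thing that genuinely needs to persist from one vertex to the next is the identity (and hence coordinates) of the target $p$: at a visited vertex $q$, the procedure's decision is a function solely of $\N(q)$, the $O(1)$ bits that $\mathcal{MBDG}(V)$ stores locally at $q$, and $p$. Concretely, at $q$ we first test whether $qp$ is an edge of $\mathcal{MBDG}(V)$ (decidable from $\N(q)$ and $p$) and, if so, hop to $p$; otherwise we read the single direction bit attached to the semi-protected edge $qp$ stored at $q$, which — together with $p$'s location and the $\mathcal{BDG}(V)$-edges incident to $q$ — pins down the unique neighbour of $q$ that is the successor of $q$ on the face path towards $p$. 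No reference to the history of the walk is made, so, in contrast with the Unguided Face Walk (Observation~\ref{obs:unguide2}), the previously visited vertex need not be remembered.

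The one substantive point to verify is that the ``otherwise'' branch never gets stuck, i.e.\ that whenever the walk is at some $q \neq p$ with $qp \notin \mathcal{MBDG}(V)$, the graph really does store a semi-protected edge $qp$ at $q$ carrying a direction bit. Here I would invoke the face structure from the proof of Theorem~\ref{thm:bdg_del}: $vp$ is a chord of a face $F$ whose apex is $p$, with a middle edge and a penultimate edge of $\mathcal{BDG}(V)$ at $p$ and the remaining boundary a path of fully protected edges; the vertices the walk passes through are exactly Delaunay neighbours of $p$ on $\partial F$, so $qp \in \mathcal{DT}(V)$, and since $qp \notin \mathcal{BDG}(V)$ it is a chord of $F$ and hence not protected at $p$. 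By Lemma~\ref{thm:evprotect} it is then protected at $q$, so it is precisely a semi-protected edge stored at $q$ with its direction bit. (At the two boundary vertices of $F$ joined to $p$ by the middle, respectively penultimate, edge — which are fully protected by Lemma~\ref{lem:pensmall} — the first branch of the procedure fires instead, so those cases cause no trouble.)

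Combining the two points, a Guided Face Walk executes while carrying only the single vertex $p$, which is $O(1)$ words, proving the observation. The main obstacle is the second step: ensuring the ``otherwise'' branch is always available amounts to checking that every vertex on the walk carries either an $\mathcal{MBDG}(V)$-edge to $p$ or a semi-protected edge to $p$, and this is the part that uses the geometry — Lemma~\ref{thm:evprotect} together with the face structure of Theorem~\ref{thm:bdg_del} — rather than being pure bookkeeping; everything else is a direct inspection of the routing procedure.
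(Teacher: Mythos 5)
Your proposal is correct and takes essentially the same route as the paper, whose entire justification is the one-line clause in the statement: the only state a Guided Face Walk carries between hops is the target $p$ itself, which is $O(1)$ words. Your extra verification that the ``otherwise'' branch is always available at intermediate vertices (via Lemma~\ref{thm:evprotect} and the face structure from the proof of Theorem~\ref{thm:bdg_del}) goes beyond what the paper records for this observation, but it is consistent with how the paper defines semi-protected edges and Guided Face Walks and does not change the argument.
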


\begin{observation}\label{obs:guide}
    A Guided Face Walk from $p$ to $v$ has a stretch factor of at most $\max(\pi/2 , \pi \sin(\theta / 2) + 1)$ as shown in the proof of Theorem~\ref{thm:bdg_del}.
\end{observation}

\subsubsection{Simulating the Relaxation of Bonichon~\etal's Routing Algorithm~\cite{bonichon2017upper}}
We are now ready to describe the routing algorithm on $\mathcal{MBDG}(V)$ in more detail. First, we consider finding the first vertex after $s$. If $st$ is an edge, take it and terminate. 
Otherwise, at $s = v_0$, we consider all edges protected at $s$, and let $su_1$ and $su_m$ be the first such edge encountered in a counterclockwise and clockwise sweep starting from $[st]$ centred at $s$. 
There are two subcases.

(I) If both $su_1$ and $su_m$ are not middle edges at $s$, then $s, u_1, u_m$ is a Delaunay triangle $A_0$. Determine whether to route to $u_1$ or $u_m$, using the same criteria used in Bonichon~\etal's routing algorithm~\cite{bonichon2017upper} (see the beginning of Section~\ref{sec:routing}). If the picked edge is fully protected, we follow it. Otherwise, we take the Guided Face Walk from $s$ to this vertex. 

(II) If one of $su_1$ and $su_m$ is a middle edge at $s$, the other edge must then be a penultimate edge. 
Then, $A_0 = s,p,q$ must be contained in the cone with apex $s$ sweeping clockwise from $su_1$ to $su_m$. 
We assume that $su_1$ is shorter than $su_m$. 
Take the Unguided Face Walk from $s$ until some $u_i$ such that $u_i = p$ is above $[st]$ and $u_{i+1} = q$ is below $[st]$. 
We have now found $A_0 = s,p,q$ and we determine whether to route to $p$ or $q$, using the same criteria used in Bonichon~\etal's routing algorithm~\cite{bonichon2017upper} (see the beginning of Section~\ref{sec:routing}). 

In both cases, the memory used for the Face Walks is cleared and $A_0 = s, u_1, u_m$ or $A_0 = s, p, q$ is stored as the last triangle used.

Next, we focus on how to simulate a routing step from an arbitrary vertex $v_i$. 
Suppose $v_i$ is above $[st]$, and that $A_{i-1}$ is stored in memory. 
If $v_it$ is an edge, take it and terminate. 
Otherwise, let $v_if$ be the rightmost edge of $A_{i-1}$ that intersects $[st]$, and $\overline{v_if}$ be its extension to a line.
Make a counterclockwise sweep, centred at $v_i$ and starting at $v_if$, through all edges that are protected at $v_i$ that lie in the halfplane defined by $\overline{v_if}$ that contains $t$. 
Note that this region must have at least one such edge, since otherwise $v_if$ is a convex hull edge, which cannot be the case since $s$ and $t$ are on opposite sides. 

(I) If there is some edge that does not intersect $[st]$ in this sweep, let $v_iu_1$ be the first such edge encountered in the sweep and let $v_iu_m$ be the protected edge immediately clockwise to $v_iu_1$ at $v_i$. 
There are two cases to consider. 

(I.I) If $A_{i-1}$ is not contained in the cone with apex $v_i$ sweeping clockwise from $v_iu_1$ to $v_iu_m$ (see Figure~\ref{fig:simulation_1}a), simulating a step of the relaxation of Bonichon~\etal's Routing Algorithm~\cite{bonichon2017upper} is analogous to the method used for the first step: determine if $v_iu_1$ or $v_iu_m$ is a middle edge and use a Guided or Unguided Face Walk to reach the proper vertex of $A_i$. 

\begin{figure}[ht] 
    \centering
    \includegraphics{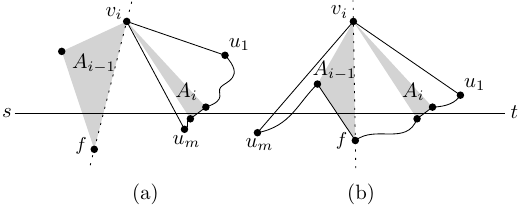}
    \caption{Simulating a step of the relaxation of Bonichon~\etal's routing algorithm~\cite{bonichon2017upper}: (a) Case~I.I, (b) case~I.II.}
    \label{fig:simulation_1}
\end{figure}

(I.II) If $A_{i-1}$ is contained in the cone with apex $v_i$ sweeping clockwise from $v_iu_1$ to $v_iu_m$ (see Figure~\ref{fig:simulation_1}b), then one of $v_iu_1$ and $v_iu_m$ must be a middle edge and the other a penultimate edge. This must be the case since the edge $v_if$ is contained in the interior of the cone with apex $v_i$ sweeping clockwise from $v_iu_1$ to $v_iu_m$ and is thus not protected at $v_i$; unprotected edges at $v_i$ are always between a middle and a penultimate edge. Then, $A_i = v_i,p,q$ must be contained in the cone with apex $v_i$ sweeping clockwise from $v_iu_1$ to $v_if$. 

We take the Unguided Face Walk, starting from the shorter of $v_iu_1$ and $v_iu_m$. If we start from $v_iu_1$, we stop when we have found some $u_i$ such that $u_i = p$ is above $[st]$ and $u_{i+1} = q$ is below $[st]$, and make the decision to complete the Unguided Face Walk to $q$ or not. If, on the other hand, we start from $v_iu_m$, we stop when we have both passed $f$ in the Unguided Face Walk (to ensure that $A_i$ lies to the right of $A_{i-1}$) and found some $u_{i+1}$ such that $u_{i+1} = q$ is below $[st]$ and $u_i = p$ is above $[st]$, and make the decision to complete the Unguided Face Walk to $p$ or not.

(II) If all of the edges in the sweep intersect $[st]$ (see Figure~\ref{fig:simulation_2}), let $v_iu_m$ be the last edge encountered in the sweep, and $v_iu_1$ be the protected edge immediately counterclockwise to it, which must be in the halfplane defined by $\overline{v_if}$ that does not contain $t$. 
Note that $A_{i-1}$ cannot be contained in this cone, as that would imply that $\angle(u_1,v_i,u_m) \geq \pi$, making $v_iu_m$ a convex hull edge. 
Simulating the Delaunay routing algorithm is analogous to the method used for the first step: determine if $v_iu_1$ or $v_iu_m$ is a middle edge and use a Guided or Unguided Face Walk to reach the proper vertex of $A_i$. 

\begin{figure}[ht] 
    \centering
    \includegraphics{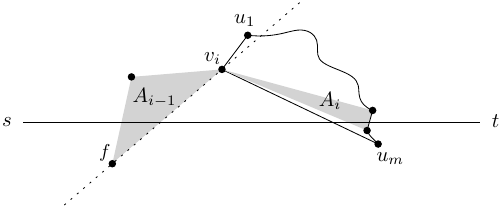}
    \caption{Simulating a step of the relaxation of Bonichon~\etal's routing algorithm~\cite{bonichon2017upper} (case~II).}
    \label{fig:simulation_2}
\end{figure}

In all cases, we clear the memory and store $A_i = v_i,p,q$ as the previous triangle. The case where $v_i$ lies below $[st]$ is analogous. We obtain the following theorem. 

\begin{theorem}\label{thm:route10}
    The simulation of the relaxation of Bonichon~\etal's routing algorithm~\cite{bonichon2017upper} on $\mathcal{MBDG}(V)$ is 1-local, has a routing ratio of at most $5.90 \cdot \max(\pi/2 , \pi \sin(\theta / 2) + 1)$ and uses $O(1)$ memory.
\end{theorem}

%%%%%%%%%%%%%%%%%%%%%%%%%%%%%%%%%%%%%%%%%%%%%%%%%%%%%
%%%%%%%%%%%%%%%%%%% Lightness %%%%%%%%%%%%%%%%%%%%%%%
%%%%%%%%%%%%%%%%%%%%%%%%%%%%%%%%%%%%%%%%%%%%%%%%%%%%%
\section{Lightness} \label{sec:lightness}
In the previous sections we have presented a bounded degree network $\mathcal{MBDG}(V)$ with small spanning ratio that allows for local routing. It remains to show how we can prune this graph even further to guarantee that the resulting network $\mathcal{LMBDG}(V)$ also has low weight. 

We will describe a pruning algorithm that takes $\mathcal{MBDG}(V)$  and returns a graph (Light Marked Bounded Degree Graph) $\mathcal{LMBDG}(V) \subseteq \mathcal{MBDG}(V)$, allowing a trade-off between the weight (within a constant times that of the minimum spanning tree of $V$) and the (still constant) stretch factor. Then, we show how to route on $\mathcal{LMBDG}(V)$ with a constant routing ratio and constant memory. 

\subsection{The Levcopoulos and Lingas Protocol}
To bound the weight of $\mathcal{MBDG}(V)$, we use the algorithm by Levcopoulos and Lingas~\cite{levcopoulos1992there} with two slight modifications: (1) allow any planar graph as input instead of only Delaunay triangulations, and (2) marking the endpoints of pruned edges to facilitate routing. 

At a high level, the algorithm works as follows: Given $\mathcal{MBDG}(V)$, we compute its minimum spanning tree and add these edges to $\mathcal{LMBDG}(V)$.  We then take an Euler Tour around the minimum spanning tree, treating it as a degenerate polygon $P$ enclosing $V$.  Finally, we start expanding $P$ towards the convex hull $CH(V)$.  As edges of $\mathcal{MBDG}(V)$ enter the interior of $P$, we determine whether to add them to $\mathcal{LMBDG}(V)$. This decision depends on a given  parameter $r > 0$.  If an edge is excluded from $\mathcal{LMBDG}(V)$, we augment its endpoints with information to facilitate routing should that edge be used in the path found on $\mathcal{MBDG}(V)$.  Once $P$ has expanded into $CH(V)$, we return $\mathcal{LMBDG}(V)$. 

To explicate further upon this, let us first acknowledge and differentiate between a few kinds of edges which will play a part in the following discussion:
\begin{enumerate}
    \item Convex hull edges of $CH(V)$. 
    \item Boundary edges of the polygon $P$ that encloses $V$. 
    \item Included settled edges, which are edges of $\mathcal{MBDG}(V)$ in $P$ and included in $\mathcal{LMBDG}(V)$. 
    \item Excluded settled edges, which are edges of $\mathcal{MBDG}(V)$ in $P$ and excluded from $\mathcal{LMBDG}(V)$. 
    \item Unsettled edges, which are edges of $\mathcal{MBDG}(V)$ outside of $P$ and whose inclusion in $\mathcal{LMBDG}(V)$ have not yet been determined. 
\end{enumerate}
Note that while the last three kinds are mutually exclusive, there may be edges which are of more than one kind. 
For example, a boundary edge of $P$ can coincide with a convex hull edge of $CH(V)$. 
\subsubsection{How the Polygon Grows}
For each iteration of the Levcopoulos and Lingas Protocol, $P$, a polygon without holes, grows, consuming more area and more edges of $\mathcal{MBDG}(V)$, until it coincides completely with the convex hull $CH(V)$.  Let us consider a single iteration of the algorithm. 

Consider any edge $uv$ on the convex hull $CH(V)$. If part of the boundary of $P$ coincides with $uv$, there is nothing to consider. However, if that is not the case, then, among the two paths from $u$ to $v$ along the boundary of $P$, consider the path $\partial P(u,v)$ which has a part visible to $uv$ (see Figure~\ref{fig:weight1}a); that is, there exists a line segment connecting the interior of $\partial P(u,v)$ to the interior of $uv$ that does not intersect the interior of $P$. 

$\partial P(u,v)$ concatenated with $uv$ then forms a closed curve $C$ on the plane that does not intersect the interior of $P$. $C$ is further subdivided by unsettled edges (non-crossing by planarity), with endpoints between vertices of $\partial P(u,v)$, into cells $c_1,\dots,c_{k}$ (see Figure~\ref{fig:weight1}b). 

\begin{figure}
\begin{center}
    \includegraphics{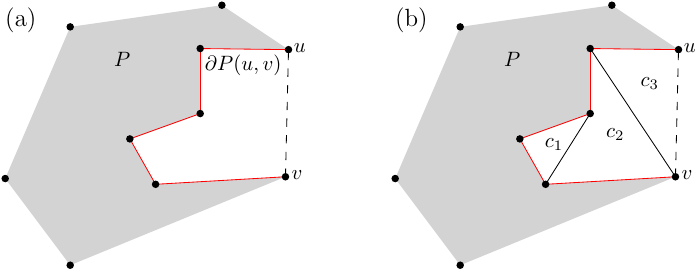}
    \caption{(a) $\partial P(u,v)$ has a part visible to $uv$. The dotted edge is a convex hull edge of $V$. (b) $\partial P(u,v) \cup uv$ is subdivided into $k$ cells. Each gray edge is an edge in $\mathcal{MBDG}(V)$.}
    \label{fig:weight1}
\end{center}
\end{figure}

If there are no unsettled edges, we expand $\partial P(u,v)$ into $uv$ by removing $\partial P(u,v)$ from $P$ and adding $uv$ to $P$. 
If, on the other hand, there is at least one unsettled edge, 
there must be some cell $c_i$ whose entire boundary, minus one unsettled edge $pq$, coincides with a part 
of $\partial P(u,v)$ (see Figure~\ref{fig:weight34}(a)). 
\begin{figure}[ht]
\begin{center}
    \includegraphics{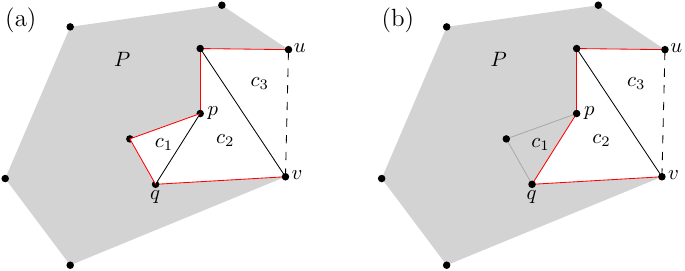}
\end{center} 
\caption{(a) $c_1$ coincides with part of $\partial P(u,v)$ except its one unsettled edge. $c_2$ and $c_3$ are not candidates for expansion. (b) Expansion of $P$ into $c_1$.}\label{fig:weight34}
\end{figure}
Then, we consider the addition of $pq$ into $\mathcal{LMBDG}(V)$, make it a settled edge, and expand $P$ into $c_i$ by removing the subpath from $p$ to $q$ along $\partial P(u,v)$ from $P$, and adding the edge $pq$ to $P$ (see Figure~\ref{fig:weight34}(b)). 
Since the area of $P$ is increasing, this process must eventually terminate. 
\subsubsection{Condition for Including an Edge}
The decision whether to include an edge in $\mathcal{LMBDG}(V)$ depends on an adjustable parameter $r > 0$, 
which causes an increase in the stretch factor by a factor of at most $1 + 1/r$ and ensures a weight of at most $(2r + 1)$ times that of $MST(\mathcal{MBDG}(V))$. 

All settled edges are assigned a $weight \geq 0$, which is the length of a short (but not necessarily shortest) path between their 
endpoints that uses only the currently included settled edges, which are by definition edges of $\mathcal{LMBDG}(V)$. 
Initially, $weight(pq) = \Abs{pq}$ for all edges $pq$ in the minimum spanning tree of $\mathcal{MBDG}(V)$. 
Now, when considering whether to include the unsettled edge $uv$ into $\mathcal{LMBDG}(V)$, we take the sum $S$ of the $weight$ of edges in $\partial P(u,v)$. 
These edges have been settled and thus have $weight$ assigned. 
If $S$ is greater than $(1 + 1/r)\cdot |uv|$, add $uv$ to $\mathcal{LMBDG}(V)$ and assign it a $weight$ of $|uv|$ now that is has been settled. 
Otherwise, settle $uv$ but exclude it, and assign it a $weight$ of $S$. We can see that $S$ is the length of the path from $u$ to $v$; that is, the concatenation of paths between the endpoints of edges in $\partial P(u,v)$. 

\subsection{Bounds on the Levcopoulos and Lingas Protocol}
Given an unsettled edge $uv$, let $\partial P(u,v)$ be the path along $P$ from $u$ to $v$ such that $\partial P(u,v)$ concatenated with $uv$ forms a closed curve 
that does not intersect the interior of $P$. When processing an edge $uv$, it is added to $\mathcal{LMBDG}(V)$ when the summed weight of the edges of $\partial P(u,v)$ is greater than $(1 + 1/r) \cdot |uv|$. This implies that $\mathcal{LMBDG}(V)$ is a spanner. 

\begin{theorem}\label{thm:weight1}
    $\mathcal{LMBDG}(V)$ is a $(1 + 1/r)$-spanner of $\mathcal{MBDG}(V)$ for an adjustable parameter $r > 0$. 
\end{theorem}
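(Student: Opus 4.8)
The plan is to show that for every pair of vertices $u, v \in V$, the graph $\mathcal{LMBDG}(V)$ contains a path from $u$ to $v$ of length at most $(1 + 1/r)$ times the shortest-path distance between $u$ and $v$ in $\mathcal{MBDG}(V)$. Since stretch factors compose along paths, it suffices to prove this when $uv$ is a single edge of $\mathcal{MBDG}(V)$: for a general shortest path $u = x_0, x_1, \dots, x_\ell = v$ in $\mathcal{MBDG}(V)$, we concatenate the $(1+1/r)$-spanning paths for each edge $x_ix_{i+1}$ to obtain a path in $\mathcal{LMBDG}(V)$ of length at most $(1+1/r)\sum_i |x_ix_{i+1}| = (1+1/r)\cdot d_{\mathcal{MBDG}(V)}(u,v)$.

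So fix an edge $pq$ of $\mathcal{MBDG}(V)$. I would argue by induction on the order in which edges are \emph{settled} by the Levcopoulos--Lingas protocol, proving the stronger invariant: once an edge $e = pq$ is settled, $weight(e)$ is the length of some path from $p$ to $q$ using only included settled edges (hence only edges of $\mathcal{LMBDG}(V)$), and $weight(e) \le (1 + 1/r)\cdot |e|$. The base case is the minimum spanning tree edges, for which $weight(pq) = |pq|$ and the path is the edge itself (an included settled edge), so both parts hold. For the inductive step, when $pq$ is settled it is the lone unsettled edge of some cell $c_i$, and $S = \sum_{e \in \partial P(p,q)} weight(e)$ where every edge of $\partial P(p,q)$ was settled earlier. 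If $pq$ is included, then $weight(pq) = |pq|$ and the path is $pq$ itself, an included settled edge; the bound $|pq| \le (1+1/r)|pq|$ is immediate. If $pq$ is excluded, then by the protocol $S \le (1+1/r)|pq|$, and $weight(pq) := S$; by the induction hypothesis each $weight(e)$ for $e \in \partial P(p,q)$ is realised by a path of included settled edges between the endpoints of $e$, so concatenating these paths along $\partial P(p,q)$ yields a path from $p$ to $q$ of length exactly $S = weight(pq) \le (1+1/r)|pq|$, entirely within $\mathcal{LMBDG}(V)$. This closes the induction, and since every edge of $\mathcal{MBDG}(V)$ is eventually settled (the polygon $P$ grows to $CH(V)$ and consumes every edge), the claim holds for all edges.

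The main obstacle is justifying that the concatenation of sub-paths along $\partial P(p,q)$ is legitimately a path in $\mathcal{LMBDG}(V)$ at the moment $pq$ is settled --- in particular, that the edges of $\partial P(p,q)$ are settled strictly before $pq$ (so the induction hypothesis applies to them) and that the included settled edges remain in $\mathcal{LMBDG}(V)$ permanently, i.e. no edge is ever un-included. The former follows from the structure of a single expansion step: $pq$ is the unique unsettled edge bounding the cell $c_i$, so every other boundary edge of $c_i$ --- which together make up $\partial P(p,q)$ --- is already a (settled) boundary edge of $P$, hence was settled in an earlier step. The latter is immediate from the protocol, which only ever adds edges to $\mathcal{LMBDG}(V)$. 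One should also note the degenerate case $\partial P(p,q)$ consisting of a single MST edge coinciding with $pq$, and the case where $pq$ is itself a convex hull or boundary edge --- but these only make the path shorter and the bound easier. With these observations in place the argument is routine.
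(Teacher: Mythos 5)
Your proof is correct and follows essentially the same reasoning as the paper, which states the theorem with only the one-sentence justification preceding it (``this implies that $\mathcal{LMBDG}(V)$ is a spanner'') and carries out the same induction on settled edges explicitly only in the later theorem on face paths for excluded edges. Your write-up of the invariant --- that $weight(e) \leq (1+1/r)\cdot|e|$ is always realised by a path of included settled edges --- together with composing the per-edge bound along shortest paths is exactly the intended argument, just spelled out in full.
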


\begin{theorem}\label{thm:weight2}
    $\mathcal{LMBDG}(V)$ has weight at most $(2r + 1)$ times the weight of the minimum spanning tree of $\mathcal{MBDG}(V)$ for an adjustable parameter $r > 0$. 
\end{theorem}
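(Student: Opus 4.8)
The plan is an amortised ``potential'' argument whose potential $\Phi$ at any moment is the total $weight$ currently assigned to the boundary edges of the growing polygon $P$. Write $W = wt(MST(\mathcal{MBDG}(V)))$. At the start $P$ is the Euler tour of $MST(\mathcal{MBDG}(V))$, a degenerate polygon traversing each tree edge twice, and every tree edge $e$ carries $weight(e)=|e|$, so $\Phi_0 = 2W$; at the end $P = CH(V)$ and $\Phi_{\mathrm{end}}\ge 0$ since all weights are non-negative, so the total decrease of $\Phi$ over the run is at most $2W$. As a preliminary I would note, by induction on the order in which edges are settled, that $weight(e)\ge|e|$ for every settled edge $e$: it holds for $MST$ edges by definition, and when $e=pq$ is settled its weight is either $|pq|$ or the $weight$-sum $S$ of the edges of $\partial P(p,q)$, and by the inductive hypothesis that sum is at least the Euclidean length of the path $\partial P(p,q)$, hence at least $|pq|$. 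In particular the $weight$-sum of any boundary subpath is at least the straight-line distance between its endpoints.

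Next I would verify that no elementary step of the protocol increases $\Phi$. Each step deletes a boundary subpath $\partial P(p,q)$ of total $weight$ $S$ and inserts the single edge $pq$ with a freshly assigned weight: if $pq$ is excluded it is given weight exactly $S$ and $\Phi$ is unchanged; if $pq$ is included it is given weight $|pq|$, and inclusion happened precisely because $S>(1+1/r)\,|pq|\ge|pq|$, so $\Phi$ strictly drops, and by exactly $S-|pq|>(1+1/r)|pq|-|pq|=|pq|/r$; and in the degenerate step where a pocket contains no unsettled edge and $P$ advances into a convex hull edge $uv$, the inserted boundary edge gets weight $|uv|$, which by the preliminary is at most the $weight$-sum of the $\partial P(u,v)$ it replaces, so $\Phi$ does not increase (such an edge is never added to $\mathcal{LMBDG}(V)$, or only if it passes the same test, so it is never charged outside the budget).

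Finally I would sum up. Every edge of $\mathcal{MBDG}(V)$ not in $MST(\mathcal{MBDG}(V))$ becomes settled exactly once; let $\mathcal{L}$ be the set of those edges that end up in $\mathcal{LMBDG}(V)$. Each such edge was included at its settling step and therefore contributed a potential drop exceeding $|pq|/r$, so $\sum_{pq\in\mathcal{L}}|pq|/r$ is less than the sum of all per-step drops, which telescopes to $\Phi_0-\Phi_{\mathrm{end}}\le 2W$; hence the edges of $\mathcal{L}$ have total length less than $2rW$. Since $\mathcal{LMBDG}(V)$ consists of the $MST$ edges inserted at the start (total weight $W$) together with $\mathcal{L}$, we get $wt(\mathcal{LMBDG}(V)) < W + 2rW = (2r+1)\,W$, which gives the claimed bound.

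The step I expect to be the main obstacle is the monotonicity bookkeeping: one has to confirm that \emph{every} way the polygon can advance --- expanding a visible subpath of $\partial P$ directly into a convex hull edge, and expanding into a cell whose only non-boundary side is a single unsettled edge --- reassigns weights so that $\Phi$ never goes up, and that convex hull edges consumed by the degenerate step are handled so that they are not charged against the $2W$ budget. The preliminary inequality $weight(e)\ge|e|$ is precisely what makes that degenerate step safe.
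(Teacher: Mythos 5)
Your argument is correct and is essentially the paper's own proof in a different notation: your potential $\Phi=\sum_{e\in P} weight(e)$ scaled by $r$ is exactly the paper's $credit(P)$ (the paper maintains $credit(e)\ge r\cdot weight(e)$ as an invariant and a conservation law for $credit(P)$ plus the included weight, which is the same amortisation as your telescoping drop of more than $|pq|/r$ per included edge). Your preliminary inequality $weight(e)\ge|e|$ and the explicit handling of the degenerate hull-edge expansion are small additions the paper omits, but they do not change the approach.
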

\begin{proof}
    Let $P$ be the polygon that encloses $V$ in the above algorithm. Initially $P$ is the degenerate polygon described by the Euler tour of the minimum spanning tree of $V$ in $\mathcal{LMBDG}(V)$.
    Give each edge $e$ of $P$, a starting credit of $r|e|$. Denote the sum of credits of edges in $P$ with $credit(P)$. The sum of $credit(P)$ and the weight of the initially included settled edges is then $(2r + 1)$ times the weight of the minimum spanning tree of $\mathcal{MBDG}(V)$. 

    As $P$ is expanded and edges are settled, we adjust the credits in the following manner:
    \begin{itemize}
        \item If an edge $uv$ is added into $\mathcal{LMBDG}(V)$ when settled, we set the credit of the newly added edge $uv$ of $P$ to $credit(\partial P(u,v)) - |uv|$, and the credit of the edges along $\partial P(u,v)$ to 0. 
        \item If an edge is excluded from $\mathcal{LMBDG}(V)$ when settled, we set the credit of the newly excluded edge $uv$ of $P$ to $credit(\partial P(u,v))$, and the credit of edges along $\partial P(u,v)$ to 0. 
    \end{itemize}
    We can see that the sum of $credit(P)$ and the weights of included settled edges, at any time, is at most $2r + 1$ times the weight of the minimum spanning tree of $\mathcal{MBDG}(V)$ since it strictly drops when adding an edge when it is settled and stays the same when excluding an edge when it is settled. 

    It now suffices to show that $credit(P)$ is never negative, which we do by showing that for every edge $uv$ of $P$, at any time, $credit(uv) \geq r\cdot weight(uv) \geq 0$. We do this by induction over the edges in the order they are settled. 
    For the base case, when $P$ is the Euler Tour around the minimum spanning tree of $\mathcal{MBDG}(V)$, we have that $credit(uv) = r\cdot weight(uv)$. 
    For the induction step, let $uv$ be a settled edge. There are two cases:

    (I) If $uv$ is added to $\mathcal{LMBDG}(V)$, then $credit(uv)$ equals 
    \begin{align*}
        credit(\partial P(u,v)) - \Abs{uv} \geq\; & r \cdot weight(\partial P(u,v)) - \Abs{uv}\\
        \geq\; & r (1 + 1/r) \Abs{uv} - \Abs{uv}\\
        =\; & r\cdot weight(uv).
    \end{align*}
    The first inequality holds from the induction hypothesis, and the second inequality and last equality hold since $uv$ is added to $\mathcal{LMBDG}(V)$. 

    (II) If $uv$ is not added to $\mathcal{LMBDG}(V)$, then $credit(uv)$ equals
    \[credit(\partial P(u,v)) \geq r \cdot weight(\partial P(u,v)) = r\cdot weight(uv).\]
    The first inequality holds from the induction hypothesis, and the equality holds since $uv$ was not added. 

    Since $credit(P)$ is never negative, and the sum of $credit(P)$ and the weights of included settled edges is at most $2r + 1$ times the weight of the minimum spanning tree of $\mathcal{MBDG}(V)$, 
    the theorem follows. 
\end{proof}

\noindent
Putting together all the results so far, we get:

\begin{theorem} \label{thm:LMBDG}
    Given a set $V$ of $n$ points in the plane together with two parameters $0 < \theta < \pi/2$ and $r > 0$, one can compute in $O(n \log n)$ time a planar graph $\mathcal{LMBDG}(V)$ that has degree at most $5\Ceil{2\pi/\theta}$, weight of at most $((2r+1)\cdot \tau)$ times that of a minimum spanning tree of $V$, and is a $((1+1/r) \cdot \tau)$-spanner of $V$, where $\tau=1.998 \cdot \max(\pi/2,\pi \sin(\theta/2) + 1)$. 
\end{theorem}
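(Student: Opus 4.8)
The statement is essentially a conjunction of facts already established (or nearly established) in the preceding sections, so the proof is a matter of composing them and tracking how the parameters multiply. The plan is to build the graph in three stages and read off each of the four claimed properties (running time, degree, weight, stretch) from the corresponding stage. First I would invoke Corollary~\ref{cor:desideratum1}: compute $\mathcal{DT}(V)$ and prune to $\mathcal{BDG}(V)$ in $O(n\log n)$ time, obtaining a planar $\tau$-spanner of $V$ with degree at most $5\Ceil{2\pi/\theta}$, where $\tau = 1.998\cdot\max(\pi/2,\pi\sin(\theta/2)+1)$. Then augment to $\mathcal{MBDG}(V)$ (Theorem~\ref{thm:mbdg1}/the construction in Section~\ref{sec:routing}); this is only a decoration of edges with $O(1)$ bits each and changes neither the graph nor its asymptotic construction time. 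Finally apply the Levcopoulos--Lingas pruning of Section~\ref{sec:lightness} to get $\mathcal{LMBDG}(V)\subseteq\mathcal{MBDG}(V)$.

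For the individual properties: planarity and the degree bound are inherited downward, since both $\mathcal{MBDG}(V)$ and $\mathcal{LMBDG}(V)$ are subgraphs of the planar, bounded-degree $\mathcal{BDG}(V)$ (removing edges can only lower the degree and cannot destroy planarity). For the stretch factor I would chain the two spanner bounds: $\mathcal{LMBDG}(V)$ is a $(1+1/r)$-spanner of $\mathcal{MBDG}(V)$ by Theorem~\ref{thm:weight1}, and $\mathcal{MBDG}(V)$ (having the same edge set as $\mathcal{BDG}(V)$) is a $\tau$-spanner of $V$ by Corollary~\ref{cor:desideratum1}, so composing the detours gives a $((1+1/r)\cdot\tau)$-spanner of $V$. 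For the weight, Theorem~\ref{thm:weight2} gives $wt(\mathcal{LMBDG}(V)) \le (2r+1)\cdot wt(MST(\mathcal{MBDG}(V)))$; since $\mathcal{MBDG}(V)$ is a $\tau$-spanner of $V$, any MST of $\mathcal{MBDG}(V)$ has weight at most $\tau$ times that of a Euclidean MST of $V$ (a shortest-path argument: replace each MST-of-$V$ edge by a $\tau$-approximating path in $\mathcal{MBDG}(V)$ and extract a spanning tree), yielding $wt(\mathcal{LMBDG}(V)) \le (2r+1)\cdot\tau\cdot wt(MST(V))$. For the running time, I would note that $\mathcal{BDG}(V)$ and its marked version take $O(n\log n)$, and the pruning phase operates on a planar graph with $O(n)$ edges: computing an MST of a planar graph takes $O(n)$ (or $O(n\log n)$) time, the Euler tour is linear, and each of the $O(n)$ polygon-expansion iterations settles at least one edge while doing $O(1)$ work amortised over the face it consumes, so the whole pruning runs in $O(n\log n)$.

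The one step that needs genuine care rather than mere bookkeeping is the passage from $wt(MST(\mathcal{MBDG}(V)))$ to $\tau\cdot wt(MST(V))$: one must be sure that "$\mathcal{MBDG}(V)$ is a $\tau$-spanner of $V$" indeed forces its graph-MST to be within $\tau$ of the Euclidean MST. I expect this to be the main (though still routine) obstacle, and I would handle it by the standard argument that for a connected $t$-spanner $H$ on $V$, $wt(MST(H)) \le t\cdot wt(MST(V))$: walk the Euclidean MST, replace each of its $n-1$ edges by a path in $H$ of length at most $t$ times the edge, take the union of these paths (a connected subgraph of $H$ of weight at most $t\cdot wt(MST(V))$), and note its MST is no heavier. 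Everything else is a direct citation of the numbered results above, so once this link is in place the theorem follows by assembling the bounds, and I would close by remarking that the $1$-local $O(1)$-memory routing claim from the abstract is addressed separately in the routing sections (Theorems~\ref{thm:route10} and its light analogue) and is not part of this particular statement.
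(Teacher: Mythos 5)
Your proposal is correct and follows essentially the same decomposition as the paper's proof: planarity and degree by the subgraph relation, stretch by composing Theorem~\ref{thm:weight1} with Corollary~\ref{cor:desideratum1}, weight by Theorem~\ref{thm:weight2} together with the fact that a $\tau$-spanner's graph-MST is within $\tau$ of the Euclidean MST, and running time from the $O(n\log n)$ construction plus linear-time Levcopoulos--Lingas pruning. The only difference is cosmetic: the paper cites Callahan and Kosaraju for the MST comparison, whereas you re-derive it with the standard replace-each-edge-by-a-spanning-path argument, which is valid.
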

\begin{proof}
    Let us start with the running time. The algorithm by Levcopoulos and Lingas (Lemma~3.3 in~\cite{levcopoulos1992there}) can be implemented in linear time and, according to Corollary~\ref{cor:desideratum1}, $\mathcal{BDG}(V)$ can be constructed in $O(n \log n)$ time, hence, $O(n \log n)$ in total.
    
    The degree bound and planarity follow immediately from the fact that $\mathcal{LMBDG}(V)$ is a subgraph of $\mathcal{MBDG}(V)$, and the bound on the stretch factor follows from Theorem~\ref{thm:weight1} and Corollary~\ref{cor:desideratum1}. 
    
    It only remains to bound the weight. Callahan and Kosaraju~\cite{ck-fasgg-93} showed that the weight of a minimum spanning tree of a Euclidean graph $G(V)$ is at most $t$ times that of the weight of $MST(V)$ whenever $G$ is a $t$-spanner on $V$. Since $\mathcal{MBDG}(V)$ is a $\tau$-spanner on $V$ by Corollary~\ref{cor:desideratum1}, $\mathcal{LMBDG}(V)$ has weight of at most $((2r+1)\cdot \tau)$ times that of the minimum spanning tree of $V$ by Theorem~\ref{thm:weight2}. This concludes the proof of the theorem. 
\end{proof}

\noindent
Finally, we prove that $\mathcal{LMBDG}(V)$ has short paths between the ends of pruned edges. 

\begin{theorem}
    Let $uv$ be an excluded settled edge. There is a face path in $\mathcal{LMBDG}(V)$ from $u$ to $v$ of length at most $(1 + 1/r)\cdot |uv|$. 
\end{theorem}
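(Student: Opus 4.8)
The plan is to trace back through how the edge $uv$ came to be excluded and to exhibit the corresponding face path in $\mathcal{LMBDG}(V)$ explicitly, using the $weight$ bookkeeping from the Levcopoulos and Lingas protocol. Recall that when $uv$ was settled and excluded, it was because the sum $S$ of the $weight$-values of the edges in $\partial P(u,v)$ satisfied $S \le (1+1/r)\cdot|uv|$, and at that moment $uv$ was assigned $weight(uv) = S$. So the first step is to observe that it suffices to show: for every settled edge $e$, there is a path in $\mathcal{LMBDG}(V)$ between the endpoints of $e$ of length at most $weight(e)$, \emph{and} this path lies along the boundary of a single face of $\mathcal{LMBDG}(V)$ (or can be taken to be a face path in the relevant sense). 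Given that, applying it to each edge of $\partial P(u,v)$ and concatenating yields a path from $u$ to $v$ of total length at most $S \le (1+1/r)\cdot|uv|$, which is the desired bound.

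The second step is to prove the auxiliary claim by induction on the order in which edges are settled, mirroring the inductive structure already used in the proof of Theorem~\ref{thm:weight2}. The base case is the minimum spanning tree edges of $\mathcal{MBDG}(V)$, which are included with $weight(e) = |e|$ and are themselves in $\mathcal{LMBDG}(V)$, so the trivial one-edge path works. For the inductive step, consider a settled edge $e = uv$. If $e$ is \emph{included}, then $e \in \mathcal{LMBDG}(V)$ and again the one-edge path suffices (and $|e| \le weight(e)$ by how weights are assigned — actually $weight(e) = |e|$ here). If $e$ is \emph{excluded}, then $\partial P(u,v)$ consists of edges settled strictly earlier, and by the induction hypothesis each such edge $e'$ has a path $\pi_{e'}$ in $\mathcal{LMBDG}(V)$ of length at most $weight(e')$; concatenating these gives a walk from $u$ to $v$ of length at most $\sum_{e' \in \partial P(u,v)} weight(e') = S = weight(e)$.

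The subtlety — and the main obstacle — is the word \emph{face path}: the statement does not merely ask for a short path, but for one along a single face of $\mathcal{LMBDG}(V)$, since that is what the routing algorithm in Section~\ref{sec:routing} needs in order to navigate locally. Here one must argue about how faces of $\mathcal{LMBDG}(V)$ arise during the polygon-growth process. At the moment $uv$ is settled and excluded, $P$ is expanded by removing the subpath of $\partial P(u,v)$ and adding $uv$ as a new boundary edge; the region swept out is exactly the cell $c_i$ bounded by $uv$ together with that subpath, and this cell contains no included settled edges in its interior (by the choice of $c_i$ having its whole boundary minus $uv$ on $\partial P(u,v)$, and by planarity no later edge can subdivide it on the excluded side). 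Consequently $uv$ and the subpath of $\partial P(u,v)$ bound a face of $\mathcal{LMBDG}(V)$ — or, since $\partial P(u,v)$ is itself composed of edges that may earlier have been excluded, one must recursively replace each such excluded boundary edge by its own bounding face path, and check that the nested cells glue together into the boundary of a single face of the final graph. I would handle this by strengthening the induction hypothesis to assert that the path $\pi_e$ produced for an excluded edge $e$ is precisely the sequence of edges that, together with $e$, bounds a face of $\mathcal{LMBDG}(V)$; planarity and the fact that $P$ grows monotonically (never re-crossing settled territory) ensure this property is preserved. The only genuinely delicate point is verifying that no edge settled \emph{after} $uv$ can cut through the cell $c_i$, which follows because once $P$ has absorbed $c_i$ its interior is on the settled side of $P$ forever, and later expansions only touch the unsettled exterior. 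With that observation in place, the concatenated path is a face path and the length bound $(1+1/r)\cdot|uv|$ follows immediately.
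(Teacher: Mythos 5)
Your proposal is correct and follows essentially the same route as the paper's own proof: an induction on the order in which edges are settled, replacing each excluded edge of $\partial P(u,v)$ by its inductively obtained face path, and using planarity together with the fact that nothing is added inside the cycle formed by $uv$ and $\partial P(u,v)$ after $uv$ is settled to conclude that the concatenation bounds a single face of length at most $weight(uv) \leq (1+1/r)\cdot|uv|$. Your explicit strengthening of the induction hypothesis to carry the face-path property is exactly the (implicit) structure of the paper's argument.
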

\begin{proof}
    If $uv$ is the first excluded settled edge processed by the Levcopoulos-Lingas algorithm, then all edges of $\partial P(u,v)$ must be included in $\mathcal{LMBDG}(V)$. By planarity, no edge will be added into the interior of the cycle consisting of $uv$ and $\partial P(u,v)$ once $uv$ is settled, and thus $uv$ will be a chord on the face in $\mathcal{LMBDG}(V)$ that coincides with $\partial P(u,v)$.  Thus, $\partial P(u,v)$ is a face path in $\mathcal{LMBDG}(V)$ from $u$ to $v$ with a length of at most $weight(uv) \leq (1+1/r)\cdot |uv|$. 

    Otherwise, if $uv$ is an arbitrary excluded edge, then some edges of $\partial P(u,v)$ may be excluded settled edges. If none are excluded, then $\partial P(u,v)$ is again a face path with length at most $weight(uv)$. However, if some edges are excluded, then, by induction, for each excluded edge $pq$ along $\partial P(u,v)$, there is a face path in $\mathcal{LMBDG}(V)$ from $p$ to $q$ with a length of $weight(pq) \leq (1 + 1/r)\cdot |pq|$. 
    Replacing all such $pq$ in $\partial P(u,v)$ by their face paths, and since no edge will be added into the interior of the cycle consisting of $uv$ and $\partial P(u,v)$ once $uv$ is settled, $\partial P(u,v)$ with its excluded edges replaced by their face paths is a face path in $\mathcal{LMBDG}(V)$ from $u$ to $v$ with a length of $weight(uv) \leq (1 + 1/r)\cdot |uv|$. 
\end{proof}

\noindent {\bf Remark:} 
The remainder of this section is not required to proceed, but it is worth mentioning this curious phenomenon. We can say something even stronger about the weight if $\theta$ is small. 
When $\theta \leq \pi/3$, 
the weight of $\mathcal{LMBDG}(V)$ can be bounded to be no more than $(2r+1)$ times that of a minimum spanning tree on $V$.

\begin{lemma}\label{lem:tree1}
    If $uv_1$ and $uv_2$ are edges in a minimum spanning tree of $V$ then $\angle(v_1,u,v_2) \geq \pi/3$. 
\end{lemma}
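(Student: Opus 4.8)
The plan is the classical minimum spanning tree exchange argument. Suppose, for contradiction, that $\angle(v_1,u,v_2) < \pi/3$, and assume without loss of generality that $|uv_1| \leq |uv_2|$. First I would analyse the triangle $\Delta u,v_1,v_2$: its three interior angles sum to $\pi$, and the angle at $u$ is strictly less than $\pi/3$, so the two angles at $v_1$ and $v_2$ together exceed $2\pi/3$; hence at least one of them exceeds $\pi/3$, and in particular exceeds the angle at $u$. Since $|uv_1| \leq |uv_2|$, the angle at $v_2$ (opposite the side $uv_1$) is at most the angle at $v_1$ (opposite the side $uv_2$), so it must be the angle at $v_1$ that exceeds the angle at $u$. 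By the correspondence between larger angles and longer opposite sides in a triangle, the side opposite $v_1$, namely $uv_2$, is strictly longer than the side opposite $u$, namely $v_1v_2$; that is, $|v_1v_2| < |uv_2|$.

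Next I would perform the swap. Let $T$ be the minimum spanning tree of $V$ containing $uv_1$ and $uv_2$. Deleting the edge $uv_2$ from $T$ separates it into two subtrees; the vertex $u$ lies in one of them, and since $uv_1 \in T$ with $v_1 \neq v_2$, the vertex $v_1$ lies in the same subtree as $u$, while $v_2$ lies in the other. Adding the edge $v_1v_2$ therefore reconnects the two subtrees into a spanning tree of $V$ whose total weight is smaller than that of $T$ by $|uv_2| - |v_1v_2| > 0$, contradicting the minimality of $T$. Hence $\angle(v_1,u,v_2) \geq \pi/3$.

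The only point requiring a little care, and the step I would treat most carefully, is the degenerate case in which $u$, $v_1$, $v_2$ are collinear, so that $\angle(v_1,u,v_2)$ is either $0$ or $\pi$; the latter is already $\geq \pi/3$, and in the former case $v_1$ must lie strictly between $u$ and $v_2$ (otherwise one of $uv_1,uv_2$ would have zero length), whence again $|v_1v_2| < |uv_2|$ and the identical exchange yields a contradiction. Alternatively one may simply assume $V$ to be in general position. Everything else is elementary planar trigonometry together with the cycle (exchange) property of minimum spanning trees; the bound $\pi/3$ is tight, as an equilateral configuration shows.
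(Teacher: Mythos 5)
Your proposal is correct and follows essentially the same route as the paper: assume $\angle(v_1,u,v_2)<\pi/3$, deduce that the angle at (without loss of generality) $v_1$ exceeds the angle at $u$, conclude $|v_1v_2|<|uv_2|$, and perform the exchange of $uv_2$ for $v_1v_2$ to contradict minimality. Your version merely spells out the choice of which endpoint plays the role of $v_1$ (via $|uv_1|\leq|uv_2|$), the validity of the tree exchange, and the collinear degenerate case, all of which the paper leaves implicit.
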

\begin{proof}
    Refer to Figure~\ref{fig:trees}(a). 
    Let $uv_1$ and $uv_2$ be edges in a minimum spanning tree of $V$. 
    Suppose for a contradiction that $\angle(v_1,u,v_2) < \pi/3$. 
    Then, without loss of generality, we can say that $\angle(u,v_1,v_2) > \pi/3$. 
    Since $\angle(v_1,u,v_2) < \pi/3$ and $\angle(u,v_1,v_2) > \pi/3$, we deduce that $\Abs{v_1v_2} < \Abs{uv_2}$. 
    We can therefore replace $uv_2$ with $v_1v_2$ to get a lighter spanning tree, contradicting the minimality of the tree. 
    Therefore, it must be that $\angle(v_1,u,v_2) \geq \pi/3$. 
\end{proof}
\begin{lemma}\label{lem:tree2}
    Fix a minimum spanning tree on $V$. 
    Let $C$ be a cone with apex $u$ and angle measure less than $\pi/3$. 
    If $uv$ is a minimum spanning tree edge contained in $C$, and if there is a $w \in V \cap C$ such that $\Abs{uw} \leq \Abs{uv}$, then we can replace $uv$ with $uw$ to get another minimum spanning tree. 
\end{lemma}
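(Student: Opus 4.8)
The plan is to run the standard minimum-spanning-tree exchange argument, and to use a \emph{second} exchange to rule out the only configuration in which the first exchange could fail. Fix the given minimum spanning tree $T$, and assume $w \neq u$ and $w \neq v$, as otherwise the statement is vacuous or trivial. The first step is a metric observation mirroring the proof of Lemma~\ref{lem:tree1}: since $v$ and $w$ both lie in the cone $C$ of angle measure less than $\pi/3$, we have $\angle(v,u,w) < \pi/3$, hence $\cos\angle(v,u,w) > 1/2$; combining this with $|uw| \le |uv|$ via the law of cosines for the triangle $\Delta u,v,w$ yields $|vw| < |uv|$.

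Next I would delete $uv$ from $T$, splitting $T$ into two subtrees $T_u \ni u$ and $T_v \ni v$, and distinguish cases according to which subtree contains $w$. If $w \in T_v$, then $uw \notin T$ (its endpoints are separated in $T \setminus \{uv\}$), and $T' := (T \setminus \{uv\}) \cup \{uw\}$ reconnects $T_u$ with $T_v$, hence is a spanning tree; its weight is $wt(T) - |uv| + |uw| \le wt(T)$ by hypothesis, so minimality of $T$ forces equality and $T'$ is itself a minimum spanning tree, which is exactly the claim.

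It remains to rule out $w \in T_u$. In that case the unique $T$-edge joining $T_u$ and $T_v$ is $uv$, so the path in $T$ from $v$ to $w$ traverses $uv$; in particular $vw \notin T$, and adding $vw$ to $T$ creates a cycle that contains the edge $uv$. Since $|vw| < |uv|$ by the first step, $(T \setminus \{uv\}) \cup \{vw\}$ is a spanning tree of strictly smaller weight than $T$, contradicting its minimality. Hence $w \in T_v$, and the previous paragraph completes the proof.

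The only delicate point — and the step I would double-check most carefully — is the inequality $|vw| < |uv|$: it is exactly where the hypothesis that $C$ has angle less than $\pi/3$ is used, and without it the second exchange breaks down. Everything else is routine manipulation of the cut and cycle properties of minimum spanning trees.
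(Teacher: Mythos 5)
Your proof is correct and follows essentially the same route as the paper's: your case split on whether $w$ lies in $T_u$ or $T_v$ after deleting $uv$ is exactly the paper's split on whether the tree path from $v$ to $w$ passes through $u$, with the same two exchanges ($uw$ in the first case, $vw$ in the second). You additionally make explicit the inequality $|vw| < |uv|$ via the law of cosines, a step the paper leaves implicit when it asserts that swapping $uv$ for $vw$ yields a strictly lighter tree.
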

\begin{proof}
    Fix a minimum spanning tree on $V$. 
    Let $C$ be a cone with apex $u$ and angle measure less than $\pi/3$, and let $uv$ be a minimum spanning tree edge contained in $C$. 
    Suppose there is a $w \in V \cap C$ such that $\Abs{uw} \leq \Abs{uv}$. 
    We consider two cases separately. 
    In the first case, when the path in the minimum spanning tree from $v$ to $w$ does not go through $u$ (see Figure~\ref{fig:trees}(b)), we can replace $uv$ with $uw$ to get a spanning tree no heavier. 
    In the second case, when the path in the minimum spanning tree from $v$ to $w$ goes through $u$ (see Figure~\ref{fig:trees}(c)), we can replace $uv$ with $vw$ to get a lighter spanning tree. 
    This is a contradiction to the minimality of the spanning tree and is thus an impossible case. 
    Since we can fix up the first case, and the second is impossible, we have shown how that we can replace $uv$ with $uw$ to get another minimum spanning tree. 
\end{proof}

    \begin{figure}[ht]
    \begin{center}
       \includegraphics[width=\textwidth]{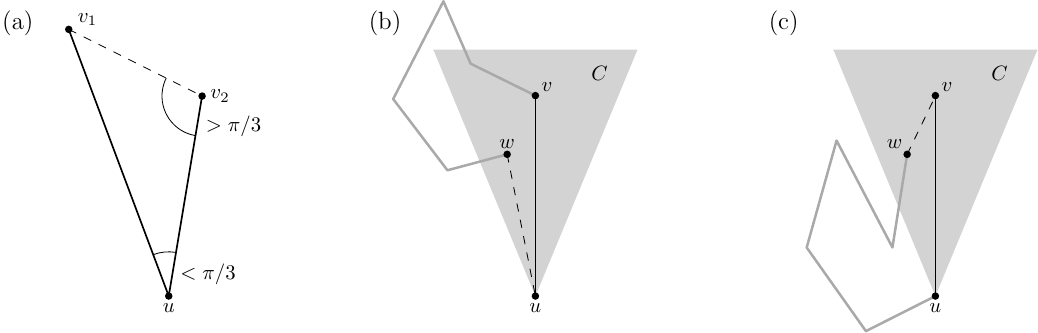}
    \end{center}
    \caption{(a) We can replace $uv_2$ with $v_1v_2$ to get a lighter tree. (b) We can replace $uv$ with $uw$ to get a tree no heavier. (c) We can replace $uv$ with $vw$ to get a lighter tree.}\label{fig:trees}
    \end{figure}

\begin{theorem}\label{thm:tree3}
    If $\theta < \pi/3$, a minimum spanning tree of $V$ is contained in $\mathcal{MBDG}(V)$. 
\end{theorem}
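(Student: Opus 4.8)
The plan is to prove the slightly stronger statement that \emph{every} minimum spanning tree of $V$ is contained in $\mathcal{MBDG}(V)$; this suffices, since $\mathcal{MBDG}(V)$ has the same edge set as $\mathcal{BDG}(V)$, which consists exactly of the fully protected Delaunay edges. Fix a minimum spanning tree $T$ of $V$. Recall the standard fact that every edge of $T$ is a Delaunay edge, so $T \subseteq \mathcal{DT}(V)$ and for each edge $uv \in T$ we may speak of the clockwise-ordered Delaunay neighbours of $u$ inside a cone. Since $\theta < \pi/3$, every cone of every $\mathcal{C}_{u,\kappa}$ has angle measure at most $\theta < \pi/3$, so Lemma~\ref{lem:tree1} and Lemma~\ref{lem:tree2} apply.

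The first step is to localise: for an edge $uv \in T$, let $C \in \mathcal{C}_{u,\kappa}$ be the cone with apex $u$ that contains $v$. I claim $uv$ is a shortest edge from $u$ to $V \cap C$. Indeed, if some $w \in (V \cap C)\setminus\{v\}$ had $\Abs{uw} < \Abs{uv}$, then by Lemma~\ref{lem:tree2} (applicable since the cone has angle measure less than $\pi/3$) the graph obtained by replacing $uv$ with $uw$ would again be a minimum spanning tree, yet its weight would be $wt(T) - \Abs{uv} + \Abs{uw} < wt(T)$, a contradiction. Assuming the pairwise distances are distinct, as is standard, ``shortest'' here is unambiguous.

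The second step is to conclude protection. Let $v_0, \dots, v_m$ be the clockwise-ordered Delaunay neighbours of $u$ in $C$, and write $v = v_i$. If $i \in \{0, m\}$, then $uv$ is extreme at $u$. If $i \in \{1, m-1\}$ then $m \geq 2$, so $C$ induces penultimate edges and $uv$ is penultimate at $u$. Otherwise $i \in \{2, \dots, m-2\}$, which forces $m \geq 4$, so $C$ induces a middle edge, namely the shortest of $uv_2, \dots, uv_{m-2}$; since by the first step $uv_i$ is shortest among \emph{all} of $uv_0, \dots, uv_m$, it is in particular shortest among $uv_2, \dots, uv_{m-2}$ and hence is that middle edge. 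In every case $uv$ is protected at $u$. Applying the identical argument at $v$ (with the cone of $\mathcal{C}_{v,\kappa}$ containing $u$) shows $uv$ is protected at $v$. Thus every edge of $T$ is fully protected, so $T \subseteq \mathcal{BDG}(V)$ and hence $T \subseteq \mathcal{MBDG}(V)$; in particular some minimum spanning tree of $V$ is contained in $\mathcal{MBDG}(V)$.

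I expect the only real subtlety to be the treatment of ties between edge lengths, which affects the well-definedness of both ``the middle edge'' and ``the shortest edge from $u$ in $C$'' — it is precisely what makes Lemma~\ref{lem:tree2} state its hypothesis with a non-strict inequality. This is handled either by a general-position assumption, or by first using the exchange step of Lemma~\ref{lem:tree2} to move to a minimum spanning tree in which, cone by cone, the chosen edge coincides with the corresponding middle (or extreme/penultimate) edge; the second step above then goes through verbatim.
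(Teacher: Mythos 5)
Your proof is correct (modulo the tie issue you flag yourself) and takes a genuinely different route from the paper's. The paper argues by exchange: it fixes an MST inside $\mathcal{DT}(V)$ and, for each tree edge $uv$ that is not protected at, say, $u$, swaps $uv$ for the middle edge $um$ of the relevant cone (Lemma~\ref{lem:tree2} applies because $|um|\le|uv|$ by definition of the middle edge), invoking Lemma~\ref{lem:tree1} to guarantee that each cone contains at most one tree edge, so the $k$ swaps trade $k$ distinct tree edges for $k$ distinct fully protected middle edges and yield \emph{another} MST contained in $\mathcal{MBDG}(V)$. You instead show that no swap is ever needed: minimality together with Lemma~\ref{lem:tree2} forces every tree edge to be a shortest edge from $u$ into its cone, whence it is extreme, penultimate, or the middle edge at $u$, and symmetrically at $v$. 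Your argument is more direct, yields the stronger conclusion that the tree itself survives the pruning, and dispenses with Lemma~\ref{lem:tree1} entirely. The price is exactly the subtlety you identify: with ties, ``a shortest edge in the cone'' need not be the edge the construction designates as \emph{the} middle edge, so without a general-position assumption your direct argument must fall back on the exchange step --- which is essentially the paper's proof, and which is insensitive to ties because it only needs $|um|\le|uv|$. Since you name this gap and give the correct repair, the proof is sound.
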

\begin{proof}
    It is a known fact that the Delaunay triangulation $\mathcal{DT}(V)$ contains a minimum spanning tree of $V$ (see~\cite{berg2008computational}). 
    Fix a minimum spanning tree of $\mathcal{DT}(V)$. 
    Suppose $uv$ is a minimum spanning tree edge that is not in $\mathcal{MBDG}(V)$. 
    It is therefore not fully protected. 
    Without loss of generality, say it is not protected at $u$. 
    Look at the cone $C \in \mathcal{C}_{u,\kappa}$ that contains $uv$.
    Let $um$ be the middle edge in $C$.
    Since the angle measure of $C$ is less than $\pi/3$, and 
    by the definition of the middle edge which says $\Abs{um} \leq \Abs{uv}$, we can replace $uv$ with $um$ by Lemma~\ref{lem:tree2} to get another minimum spanning tree. 
    Since the angle measure of $C$ is less than $\pi/3$, there can only be one such edge $uv$ in $C$ that needs replacement, by Lemma~\ref{lem:tree1}. 
    This says that we are replacing at most one minimum spanning tree edge with $um$, to get another minimum spanning tree. 
    Repeating this process for all minimum spanning tree edges that are not fully protected, we will trade a set of $k\geq0$ distinct minimum spanning tree edges that are not fully protected with $k$ distinct middle edges to get another minimum spanning tree; one that is contained in $\mathcal{MBDG}(V)$. 
\end{proof}
\begin{corollary}\label{cor:thresh}
    If $\theta < \pi/3$, $\mathcal{LMBDG}(V)$ has a weight no more than $2r+1$ times that of a minimum spanning tree of $V$. 
\end{corollary}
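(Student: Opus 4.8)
The plan is to derive this directly by chaining \Cref{thm:tree3} and \Cref{thm:weight2}. The first thing I would point out is that \Cref{thm:weight2} bounds the weight of $\mathcal{LMBDG}(V)$ against the weight of a minimum spanning tree \emph{of the graph $\mathcal{MBDG}(V)$}, whereas the corollary wants a bound against a minimum spanning tree \emph{of the point set $V$} (i.e.\ using arbitrary segments between points of $V$). So the whole content of the corollary is the observation that, under the hypothesis $\theta < \pi/3$, these two quantities are equal.

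To see that, I would invoke \Cref{thm:tree3}: when $\theta < \pi/3$, some minimum spanning tree $T$ of $V$ is a subgraph of $\mathcal{MBDG}(V)$. Since $T$ is a spanning tree of $V$ that uses only edges of $\mathcal{MBDG}(V)$, it is in particular a spanning tree of the graph $\mathcal{MBDG}(V)$, so $wt(MST(\mathcal{MBDG}(V))) \le wt(T) = wt(MST(V))$. Conversely, every spanning tree of $\mathcal{MBDG}(V)$ is a spanning structure over the points of $V$ (with edges realised by straight segments, as $\mathcal{MBDG}(V) \subseteq \mathcal{DT}(V)$), hence has weight at least $wt(MST(V))$; in particular $wt(MST(\mathcal{MBDG}(V))) \ge wt(MST(V))$. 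Therefore $wt(MST(\mathcal{MBDG}(V))) = wt(MST(V))$.

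Combining the two facts gives
\[
wt(\mathcal{LMBDG}(V)) \;\le\; (2r+1)\cdot wt(MST(\mathcal{MBDG}(V))) \;=\; (2r+1)\cdot wt(MST(V)),
\]
which is exactly the claimed bound.

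I do not expect a genuine obstacle here; this is essentially a bookkeeping corollary. The one point to state carefully is the distinction between "a minimum spanning tree of $V$'' and "a minimum spanning tree of the graph $\mathcal{MBDG}(V)$'', and the fact that \Cref{thm:tree3} does more than bound weights up to a constant — it actually places a true $MST(V)$ inside $\mathcal{MBDG}(V)$, which is precisely what is needed to equate the two weights rather than merely relate them.
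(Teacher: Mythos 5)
Your proposal is correct and matches the paper's intent: the paper's proof is literally ``This follows immediately from Theorem~\ref{thm:tree3}'', and your argument is just the careful spelling-out of that step, namely that Theorem~\ref{thm:tree3} forces $wt(MST(\mathcal{MBDG}(V))) = wt(MST(V))$ so that Theorem~\ref{thm:weight2} yields the claimed bound. No difference in approach, only in the level of detail.
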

\begin{proof}
    This follows immediately from Theorem~\ref{thm:tree3}. 
\end{proof}

%%%%%%%%%%%%%%%%%%%%%%%%%%%%%%%%%%%%%%%%%%%%%%%%%%%%%
%%%%%%%%%%%% Routing on light graph %%%%%%%%%%%%%%%%%
%%%%%%%%%%%%%%%%%%%%%%%%%%%%%%%%%%%%%%%%%%%%%%%%%%%%%
\section{Routing on the Light Graph}
\label{sec:routing2}
In order to route on $\mathcal{LMBDG}(V)$, we store edge-data at each of its endpoints when it is excluded. Specifically, let $uv$ be some excluded edge; at $u$ (and $v$) we store $uv$, along with one bit to indicate whether the starting edge of the $(1+1/r)$-path is the edge clockwise or counterclockwise to $uv$. 

\begin{observation}
    $\mathcal{LMBDG}(V)$ stores $O(1)$ words of information at each vertex. 
\end{observation}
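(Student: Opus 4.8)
The plan is to observe that $\mathcal{LMBDG}(V)$ is a subgraph of $\mathcal{MBDG}(V)$, so it inherits (a subset of) the per-vertex storage that $\mathcal{MBDG}(V)$ already maintains, which is $O(1)$ words by Theorem~\ref{thm:mbdg1}. The only thing that needs a separate argument is that the \emph{additional} information introduced for routing on the light graph --- namely, for every excluded settled edge $uv$, a copy of $uv$ together with one direction bit stored at each of $u$ and $v$ --- also amounts to only $O(1)$ words per vertex.

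The key step is to bound, for an arbitrary vertex $u$, the number of excluded settled edges incident to $u$. Every such edge is an edge of $\mathcal{MBDG}(V)$ incident to $u$, and by Corollary~\ref{cor:desideratum1} the degree of $\mathcal{MBDG}(V)$ is at most $5\Ceil{2\pi/\theta} = 5\kappa$. Hence at most $5\kappa$ excluded edges are incident to $u$, so the extra routing information stored at $u$ consists of at most $5\kappa$ edge records, each carrying a single bit --- that is, $O(1)$ words, since $\theta$ is a fixed parameter and therefore $\kappa$ is a fixed constant independent of $n$.

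Adding the two contributions, each vertex of $\mathcal{LMBDG}(V)$ stores $O(1)$ words from the $\mathcal{MBDG}(V)$ augmentation plus $O(1)$ words for the excluded-edge records, for $O(1)$ words in total. I do not anticipate any real obstacle here; the only point requiring (minor) care is to confirm that the quantities being summed --- the number of protected and semi-protected edges and the number of excluded edges at a vertex --- are all bounded by constant multiples of $\kappa$, which follows directly from the degree bound on $\mathcal{MBDG}(V)$.
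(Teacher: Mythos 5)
Your proposal is correct and matches the argument the paper leaves implicit: the extra per-vertex storage for $\mathcal{LMBDG}(V)$ consists only of the excluded-edge records, each incident to the vertex in $\mathcal{MBDG}(V)$ and hence numbering at most $5\Ceil{2\pi/\theta}$, on top of the $O(1)$ words already guaranteed by Theorem~\ref{thm:mbdg1}. Nothing further is needed.
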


To route on $\mathcal{LMBDG}(V)$, we simulate the routing algorithm on $\mathcal{MBDG}(V)$. When this algorithm would follow an excluded edge $uv$ at $u$, we store $v$ and the orientation of the face path from $uv$ at $u$ in memory. 
Then, until $v$ is reached, take the edge that is clockwise or counterclockwise to the edge arrived from, in accordance with the orientation stored. Once $v$ is reached, we proceed with the next step of the routing algorithm on $\mathcal{MBDG}(V)$.

Note that bounding the weight in this manner only requires the input graph to be planar. It transforms the pruned edges into $O(d)$ information at each vertex, where $d$ is the degree of the input graph; in our case $d$ is a constant. The scheme of simulating a particular routing algorithm and switching to a face routing mode when needed can then be applied to the resulting graph. 

\begin{theorem}\label{thm:weightasdasd}
    The routing algorithm on $\mathcal{LMBDG}(V)$ is $1$-local, has a routing ratio of $5.90(1 + 1/r)\max(\pi/2 , \pi \sin(\theta / 2) + 1)$ and uses $O(1)$ memory. 
\end{theorem}
\begin{proof}
    The $1$-locality follows by construction. 
    The routing ratio follows from Theorem~\ref{thm:route10}.
    Finally, the memory bound follows from the fact that while routing along a face path to get across a pruned edge, no such subpaths can be encountered. Thus, the only additional memory needed at any point in time is a constant amount to navigate a single face path. 
\end{proof}

\section{Conclusion}
We showed how to construct and route locally on a bounded-degree lightweight spanner. In order to do this, we simulate a relaxation of Bonichon~\etal's routing algorithm~\cite{bonichon2017upper} on Delaunay triangulations. A natural question is whether our routing algorithm can be improved by using the improved Delaunay routing algorithm by Bonichon~\etal~\cite{bonichon2018improved}. Unfortunately, this is not obvious: when applying the improved algorithm on our graph, we noticed that the algorithm can revisit vertices. While this may not be a problem, it implies that the routing ratio proof from~\cite{bonichon2018improved} needs to be modified in a non-trivial way and thus we leave this as future work. 

\bibliography{References}

\begin{thebibliography}{10}

\bibitem{addjs-93}
Ingo Alth{\"{o}}fer, Gautam Das, David~P. Dobkin, Deborah Joseph, and
  Jos{\'{e}} Soares.
\newblock On sparse spanners of weighted graphs.
\newblock {\em Discrete \& Computational Geometry}, 9:81--100, 1993.

\bibitem{AFG10}
Patrizio Angelini, Fabrizio Frati, and Luca Grilli.
\newblock An algorithm to construct greedy drawings of triangulations.
\newblock {\em Journal of Graph Algorithms and Applications}, 14(1):19--51,
  2010.

\bibitem{berg2008computational}
Mark~de Berg, Otfried Cheong, Marc~van Kreveld, and Mark Overmars.
\newblock {\em Computational {G}eometry: {A}lgorithms and {A}pplications}.
\newblock Springer-Verlag, 2008.

\bibitem{DBLP:journals/corr/BiniazBCGMS16}
Ahmad Biniaz, Prosenjit Bose, Jean{-}Lou~De Carufel, Cyril Gavoille, Anil
  Maheshwari, and Michiel H.~M. Smid.
\newblock Towards plane spanners of degree 3.
\newblock {\em Journal on Computational Geometry}, 8(1):11--31, 2017.

\bibitem{bonichon2016gabriel}
Nicolas Bonichon, Prosenjit Bose, Paz Carmi, Irina Kostitsyna, Anna Lubiw, and
  Sander Verdonschot.
\newblock Gabriel triangulations and angle-monotone graphs: Local routing and
  recognition.
\newblock In {\em International Symposium on Graph Drawing and Network
  Visualization (GD)}, volume 9801 of {\em Lecture Notes in Computer Science
  (LNCS)}, pages 519--531, 2016.

\bibitem{bonichon2018improved}
Nicolas Bonichon, Prosenjit Bose, {Jean-Lou} {De Carufel}, Vincent Despr{\'e},
  Darryl Hill, and Michiel Smid.
\newblock Improved routing on the {D}elaunay triangulation.
\newblock In {\em Proceedings of the 26th Annual European Symposium on
  Algorithms (ESA)}, volume 112 of {\em Leibniz International Proceedings in
  Informatics (LIPIcs)}, pages 22:1--22:13, 2018.

\bibitem{bonichon2017upper}
Nicolas Bonichon, Prosenjit Bose, {Jean-Lou} {De Carufel}, Ljubomir
  Perkovi{\'c}, and Andr{\'e} Van~Renssen.
\newblock Upper and lower bounds for online routing on {D}elaunay
  triangulations.
\newblock {\em Discrete \& Computational Geometry}, 58(2):482--504, 2017.

\bibitem{Bonichon2015}
Nicolas Bonichon, Iyad Kanj, Ljubomir Perkovi{\'{c}}, and Ge~Xia.
\newblock There are plane spanners of degree 4 and moderate stretch factor.
\newblock {\em Discrete {\&} Computational Geometry}, 53(3):514--546, 2015.

\bibitem{blw-19}
Glencora Borradaile, Hung Le, and Christian Wulff{-}Nilsen.
\newblock Greedy spanners are optimal in doubling metrics.
\newblock In {\em Proceedings of the 30th Annual {ACM-SIAM} Symposium on
  Discrete Algorithms (SODA)}, pages 2371--2379. {SIAM}, 2019.

\bibitem{bfrv-olrdt-15}
Prosenjit Bose, Rolf Fagerberg, Andr{\'{e}} van Renssen, and Sander
  Verdonschot.
\newblock Optimal local routing on {D}elaunay triangulations defined by empty
  equilateral triangles.
\newblock {\em {SIAM} {J}ournal on {C}omputing}, 44(6):1626--1649, 2015.

\bibitem{bose2004online}
Prosenjit Bose and Pat Morin.
\newblock Online routing in triangulations.
\newblock {\em SIAM {J}ournal on {C}omputing}, 33(4):937--951, 2004.

\bibitem{ck-fasgg-93}
Paul~B. Callahan and S.~Rao Kosaraju.
\newblock Faster algorithms for some geometric graph problems in higher
  dimensions.
\newblock In {\em Proceedings of the 4th {ACM-SIAM} {S}ymposium on {D}iscrete
  {A}lgorithms (SODA)}, pages 291--300, 1993.

\bibitem{dhn-93}
Gautam Das, Paul~J. Heffernan, and Giri Narasimhan.
\newblock Optimally sparse spanners in 3-dimensional {E}uclidean space.
\newblock In {\em Proceedings of the 9th Symposium on Computational Geometry},
  pages 53--62. {ACM}, 1993.

\bibitem{dns-95}
Gautam Das, Giri Narasimhan, and Jeffrey~S. Salowe.
\newblock A new way to weigh malnourished {E}uclidean graphs.
\newblock In {\em Proceedings of the 6th Annual {ACM-SIAM} Symposium on
  Discrete Algorithms (SODA)}, pages 215--222. {ACM-SIAM}, 1995.

\bibitem{D10}
Raghavan Dhandapani.
\newblock Greedy drawings of triangulations.
\newblock {\em Discrete {\&} Computational Geometry}, 43(2):375--392, 2010.

\bibitem{Dijkstra}
Edgar~W. Dijkstra.
\newblock A note on two problems in connexion with graphs.
\newblock {\em Numerische Mathematik}, 1:269--271, 1959.

\bibitem{DFS-90}
David~P. Dobkin, Steven~J. Friedman, and Kenneth~J. Supowit.
\newblock Delaunay graphs are almost as good as complete graphs.
\newblock {\em Discrete \& Computational Geometry}, 5:399--407, 1990.

\bibitem{dumitrescu2016lower}
Adrian Dumitrescu and Anirban Ghosh.
\newblock Lower bounds on the dilation of plane spanners.
\newblock {\em International Journal of Computational Geometry \&
  Applications}, 26(02):89--110, 2016.

\bibitem{fs-20}
Arnold Filtser and Shay Solomon.
\newblock The greedy spanner is existentially optimal.
\newblock {\em {SIAM} Journal on Computing}, 49(2):429--447, 2020.

\bibitem{g-15}
Lee{-}Ad Gottlieb.
\newblock A light metric spanner.
\newblock In {\em {IEEE} 56th Annual Symposium on Foundations of Computer
  Science (FOCS)}, pages 759--772. {IEEE}, 2015.

\bibitem{DBLP:journals/corr/KanjPT16}
Iyad~A. Kanj, Ljubomir Perkovic, and Duru T{\"{u}}rkoglu.
\newblock Degree four plane spanners: Simpler and better.
\newblock {\em Journal of Computational Geometry}, 8(2):3--31, 2017.

\bibitem{levcopoulos1992there}
Christos Levcopoulos and Andrzej Lingas.
\newblock There are planar graphs almost as good as the complete graphs and
  almost as cheap as minimum spanning trees.
\newblock {\em Algorithmica}, 8(1-6):251--256, 1992.

\bibitem{li2003efficient}
Xiang-Yang Li and Yu~Wang.
\newblock Efficient construction of low weight bounded degree planar spanner.
\newblock In {\em Proceedings of the 9th Annual International Computing and
  Combinatorics Conference (COCOON)}, volume 2697 of {\em Lecture Notes in
  Computer Science (LNCS)}, pages 374--384, 2003.

\bibitem{narasimhan2007geometric}
Giri Narasimhan and Michiel Smid.
\newblock {\em Geometric spanner networks}.
\newblock Cambridge University Press, 2007.

\bibitem{xia2013stretch}
Ge~Xia.
\newblock The stretch factor of the {D}elaunay triangulation is less than
  1.998.
\newblock {\em SIAM {J}ournal on {C}omputing}, 42(4):1620--1659, 2013.

\end{thebibliography}

\newpage
\appendix

\section{Algorithmic Construction of $\boldsymbol{\mathcal{MBDG}(V)}$} \label{app:MBDG}
We state a modification to the construction of $\mathcal{BDG}(V)$ to construction of $\mathcal{MBDG}(V)$.
Added lines to the former construction have been inserted at lines \texttt{20,21,22} and coloured red.
\begin{algorithm}
    \caption{$\mathcal{MBDG}(V)$}
    \begin{algorithmic}[1]
        \Require $V$
        \Require $0 < \theta < \pi/2$
        \State $E \gets \{\}$
        \State $\mathcal{DT} \gets \mathcal{DT}(V)$
        \For{$u \in V$}
        \State Compute $\mathcal{C}_{u,\kappa}$, where $\kappa = \Ceil{2\pi/\theta}$.
        \EndFor
        \For{$u \in V$}
            \For{$uv \in E(\mathcal{DT})$}
                \State Bucket $uv$ into $C \in \mathcal{C}_{u,\kappa}$.
            \EndFor
        \EndFor
        \For{$u \in V$}
            \For{$C \in \mathcal{C}_{u,\kappa}$}
                \State Reset values of $e_1, e_2, p_1, p_2, m$.
                \For{$e$ bucketed into $C$}
                    \State $e_1 \gets \text{arg}\min_{angle}(e_1, e)$
                    \State $e_2 \gets \text{arg}\max_{angle}(e_2, e)$
                \EndFor
                \For{$e$ bucketed into $C \backslash \{e_1,e_2\}$}
                    \State $p_1 \gets \text{arg}\min_{angle}(p_1, e)$
                    \State $p_2 \gets \text{arg}\max_{angle}(p_2, e)$
                \EndFor
                \For{$e$ bucketed into $C \backslash \{e_1,e_2,p_1,p_2\}$}
                    \State $m \gets \text{arg}\min_{length}(m, e)$
                \EndFor
                \State Mark $e_1,e_2, p_1,p_2,m$, if their values are set, as protected by $u$.
                {
                \color{red}
                \For{$uv$ bucketed into $C \backslash \{e_1,e_2,p_1,p_2,m\}$}
                    \State Store $uv$ at $v$ as a semi-protected edge, marked with $1$ if it's to the right of $m$, and $0$ otherwise.
                \EndFor
                \State Mark $e_1,e_2, p_1,p_2, m$ as extreme, penultimate, or middle at $u$, if their values are set.
                }
            \EndFor
        \EndFor
        \For{$uv \in E(\mathcal{DT})$}
            \If{$uv$ marked as protected by both endpoints}
                \State $E = E \cup \{uv\}$.
            \EndIf
        \EndFor
        \Return $(V,E)$.
    \end{algorithmic}
\end{algorithm}

\begin{theorem}\label{thm:construct}
    $\mathcal{MBDG}(V)$ takes $O(n\log n)$ time to construct.
    $\mathcal{MBDG}(V)$ takes $O(n)$ time to construct if the input is a Delaunay triangulation $\mathcal{DT}(V)$ on $V$.
\end{theorem}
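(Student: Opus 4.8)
The plan is to piggyback on the running-time analysis already carried out for Algorithm~1, since the construction of $\mathcal{MBDG}(V)$ is identical to that of $\mathcal{BDG}(V)$ except for the three lines (\texttt{20}--\texttt{22}) inserted inside the per-cone loop. First I would recall that the rest of the algorithm --- building $\mathcal{DT}(V)$, computing the cones $\mathcal{C}_{u,\kappa}$, bucketing the Delaunay edges, selecting $e_1,e_2,p_1,p_2,m$ in each cone, and the final scan over $E(\mathcal{DT})$ --- runs in $O(n\log n)$ time in total, and in $O(n)$ time when $\mathcal{DT}(V)$ is supplied as input, the bottleneck being the Delaunay triangulation itself; this is exactly the earlier running-time theorem for $\mathcal{BDG}(V)$ and nothing there changes. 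It then only remains to show that the added lines \texttt{20}--\texttt{22} contribute $O(n)$ work.

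For line \texttt{22}, re-marking the at most five distinguished edges of a cone with their type (extreme, penultimate, or middle) is $O(1)$ per cone, and since $\kappa=\Ceil{2\pi/\theta}$ is a fixed constant there are $\kappa n = O(n)$ cones over the whole run, so this costs $O(n)$ in total. For lines \texttt{20}--\texttt{21}, the inner loop ranges over the edges bucketed into $C$ other than the (at most five) distinguished ones; summed over all cones of all vertices this is bounded by the number of incidence pairs (vertex, incident Delaunay edge), which is $2\,|E(\mathcal{DT})| = O(n)$ by planarity of the Delaunay triangulation. The body does $O(1)$ work per edge $uv$: deciding whether $uv$ is clockwise or counterclockwise of the middle edge $m$ of $C$ is a single orientation test involving $u$, the other endpoint of $m$, and $v$, and storing $uv$ together with its one marking bit at $v$ is $O(1)$. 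Hence lines \texttt{20}--\texttt{21} also cost $O(n)$.

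Putting these together, the whole algorithm runs in $O(n\log n)$ time, or $O(n)$ time when the input is already $\mathcal{DT}(V)$. The only point that needs a little care --- and I expect it to be the sole (mild) obstacle --- is the claim that the marking bit in line \texttt{21} is computable in $O(1)$ time without re-sorting the cone; this is handled by noting that $m$ has already been fixed in the loop preceding lines \texttt{20}--\texttt{21}, so a constant number of comparisons against $m$ suffices per edge. Everything else is a direct transcription of the earlier analysis.
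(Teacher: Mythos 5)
Your proposal is correct and matches the paper's own argument: both reduce the claim to the earlier running-time analysis of $\mathcal{BDG}(V)$ and observe that the inserted lines only add a constant number of additional constant-time passes over each (vertex, incident Delaunay edge) pair, of which there are $O(n)$ by planarity. The paper states this more tersely (each edge is examined at most four times per endpoint), but the decomposition and the conclusion are the same.
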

\begin{proof}
    The running time of the loop at line \texttt{8} remains unchanged; it is $O(n)$ since there are a linear number of edges, each looked at at most eight times (four times per endpoint).
    For the same reasons that justify the construction time of $\mathcal{BDG}(V)$, we can then conclude that
    $\mathcal{MBDG}(V)$ takes $O(n\log n)$ time to construct and
    $\mathcal{MBDG}(V)$ takes $O(n)$ time to construct if the input is a Delaunay triangulation $\mathcal{DT}(V)$ on $V$. 
\end{proof}

\end{document}